\newcommand{\Comment}[1]{}
\newskip \point \point=1.1pt
\newif\ifmycolour \mycolourfalse
\def\firstitem{\item}
\def \shiftitem {\item}
 \author{Steffen van Bakel 
\institute{Department of Computing, Imperial College London, 180 Queen's Gate, London SW7 2BZ, UK 
\\
 \email{s.vanbakel@imperial.ac.uk}}
}
\def \mysl#1{\mbox{\rm\textsl{#1}}\kern0\point}
\def \tinyfont{\fontsize{6pt}{6.6pt}\selectfont}
\newif\if@om
\def \omfalse{\let\if@om\iffalse}
\def \omtrue{\let\if@om\iftrue}
\newif\if@lom
\def \lomfalse{\let\if@lom\iffalse}
\def \lomtrue{\let\if@lom\iftrue}
\newskip \adjustablepoint \adjustablepoint .9\point
\def \typeout#1{\@latex@warning{#1}}
\def \Exists {\exists\,}
\def \ifnextchar{\@ifnextchar}
\def \qskp {\hspace*{.25\point}}
\def \hskp {\hspace*{.5\point}}
\def \skp {\hspace*{1\point}}
\def \skipper {\hspace*{1.5\point}}
\def \Bogskipleft	{~\kern-1\point~}
\def \Bogskipright	{\ \kern-1\point\ }
\def \skipleft	{\hspace*{2\point}}
\def \skipright {\hspace*{2\point}}
\def \dquad {\quad \quad}
\def \tquad {\quad \quad \quad}
\def \qquad {\quad \quad \quad \quad}
\def \BoUndnccurve[#1]#2#3{\psset{linecolor=red}%
\nccurve[#1]{#2}{#3}\psset{linecolor=ruleyellow}}
\def \It#1{\mbox{\rmfamily \it \Itcol #1}}
\def \mynewrgbcolor#1#2{%
\ifmycolour\newrgbcolor{#1}{#2}\else \newrgbcolor{#1}{0 0 0}\fi}
\def \ltermcolour{\ifmycolour\ltermcol\fi}
\def \termcolour{\ifmycolour\termcol\fi}
\def \Yellow{\ifmycolour\Yellowcol\fi}
\def \typecolour{\ifmycolour\typecol\fi}
\def \Yellow{\ifmycolour\Yellowcol\fi}
\def \Black{\ifmycolour\Blackcol\fi}
\def \Purple{\ifmycolour\Purplecol\fi}
\def \orange{\ifmycolour\orangecol\fi}
\def \blue{\ifmycolour\bluecol\fi}
\def \Red{\ifmycolour\Redcol\fi}
\def \black{\ifmycolour\blackcol\fi}
\def \specialcol{\termcolour}
\def \mysemcolour {\mynewrgbcolor{thiscol}{1 0 1}\thiscol}
\ifmycolour \mynewrgbcolor{semcolour}{.9 0 0}\else \mynewrgbcolor{semcolour}{0 0 0}\fi
\def \PSFrame(#1,#2){\mbox{\psframe[unit=\unitlength,linewidth=.5\point,framearc=.5
](#1,#2)}}
\def \PSBox(#1,#2){\mbox{\psframe[unit=\unitlength,linewidth=.5\point,framearc=0
](#1,#2)}}
\def \Set#1{\{\,#1\,\}}
\def \Pr[#1]{~[\,#1\,]}
\def \ele {\mathbin{\in}}
\def \notele {\mathbin{\not\nobreak\in}}
\def \subssymb {(\tvar \mapsto C)}
\def \subsone#1{\subssymb\,(#1)}
\def \subs{\@ifnextchar\bgroup{\subsone}{\subssymb}}
\def \unifytwo#1#2{\It {unify}\ #1\ #2}
\def \unify{\@ifnextchar{\bgroup}{\unifytwo}{\It {unify}}}
\def \equ	{\mathbin{\sim}}
\def \beq	{\skipleft{\geq}\skipright}
\def \seqr {{\typecolour \leq}}
\def \seq	{\mathbin{\seqr}}
\def \Union {\mathbin{\cup}}
\def \And	{\mathrel{\&}}
\def \Conj	{\mathord{\wedge}}
\def \longarrow#1{\psset{unit=1\point,linewidth=0.35\point}%
\psline{cc-cc}(0,0)(#1,0)%
\hspace*{#1}%
\pscurve{cc-cc}(0,0)(-1.1,.4)(-2.5,2.2)%
\pscurve{cc-cc}(0,0)(-1.1,-.4)(-2.5,-2.2)}
\def \Rel#1#2{\setbox155=\hbox{\scriptsize $#1$}\setbox156=\hbox{\scriptsize $#2$}%
\,\raise5.5pt\copy155\kern-1\wd155%
\raise3pt\hbox{\longarrow{1\wd155}}
\kern-.5\wd155\kern-.5\wd156\raise-2pt\copy156\kern-.5\wd156\kern.5\wd155\,}
\def \dots	{\mbox	{$\cdots$}}
\def \Skip	{\hspace*{2\point}}
\def \pCsymb{\It{pC}}
\def \pCarg#1{\pCsymb\,({\termcolour #1})}
\def \pC{\@ifnextchar\bgroup{\pCarg}{\pCsymb}}
\def \pCgsymb{\It{pC}_g}
\def \pCgarg#1{\pCgsymb\,({\termcolour #1})}
\def \pCg{\@ifnextchar\bgroup{\pCgarg}{\pCgsymb}}
\def \SetApp(#1){{\cal A}(#1)}
\def \bred	{\mathrel{\bredr}}
\def \bredr {\mbox	{\Red $\rightarrow_{\beta}$}}
\def \dbredr	{\arrow \kern-8\point\bredr}
\def \LC	{\Sc{lc}}
\def \diverges {\,{\raise1.5\point\hbox{\small $\uparrow$}}}
\newcommand{\indexot}[2]{{#1}\skp{\in}\skp{#2}}
\def \iotn {\indexot{i}{\n}}
\def \jotn {\indexot{j}{\n}}
\def \jotm {\indexot{j}{\m}}
\def \Top {\mbox{\typecolour $\top$}}
\def \Bottom {\mbox{\typecolour $\perp$}}
\def \Types	{{\Red \cal T}}
\def \Tproper	{\Types\kern-3\point_{\textit{\scriptsize \Yellow p}}}
\def \Tstrict	{\Types_{\textit{\scriptsize s}}}
\def \TStrict	{\Types_{\textit{\scriptsize s}}}
\def \dom(#1){\textit{dom}\,(#1)}
\def \k	{\underline{k}}
\def \l	{\underline{l}}
\def \m	{\underline{m}}
\def \n	{\underline{n}}
\def \r	{\underline{r}}
\def \arrow	{\mathop{\rightarrow}}
\def \arr {\arrow}
\def \tvar	{\varphi}
\def \intsymb	{{\cap}} 
\def \inter {\mathord{\hskp\type{\intsymb}\hskp}}
\def \intTwo#1(#2){\intsymb_{\ul{#1}}\qskp(#2)}
\def \intOne#1{\@ifnextchar({\intTwo{#1}}{\intsymb_{\ul{#1}}\qskp}}
\def \int{\@ifnextchar\bgroup{\intOne}{\inter}}
\def \unsymb	{{\cup}} 
\def \union {\mathord{\hskp\type{\unsymb}\hskp}}
\def \compunion	{\setbox172=\hbox{$\cup$}%
\copy172\kern-.75\wd172^c\kern.25\wd172}
\def \unTwo#1(#2){\unsymb\kern-.5\point_{\ul{#1}}\qskp({#2})}
\def \unOne#1{\@ifnextchar({\unTwo{#1}}{\unsymb\kern-.5\point_{\ul{#1}}\qskp}}
\def \un{\@ifnextchar\bgroup{\unOne}{\union}}
\def \BuildInt({\@ifnextchar{n}{\BuildIntn(}{\@ifnextchar{m}{\BuildIntm(}{\BuildIntAux(}}}
\def \BuildIntAux(#1)#2{\int{#1}{#2}_l}
\def \BuildIntn(#1)#2{\int{n}{#2}_i}
\def \BuildIntm(#1)#2{\int{m}{#2}_j}
\def \BuildUn({\@ifnextchar{n}{\BuildUnn(}{\@ifnextchar{m}{\BuildUnm(}{\BuildUnA(}}}
\def \BuildUnA(#1)#2{\un{#1}{#2}_l}
\def \BuildUnn(#1)#2{\un{n}{#2}_i}
\def \BuildUnm(#1)#2{\un{m}{#2}_j}
\def \AoI#1{\BuildInt(#1){\type A }}
\def \AoU#1{\BuildUn(#1){\type A }}
\def \Aot#1{\type{A_i} ~(\forall i \ele #1)}
\def \AoIn {\AoI{n}}
\def \AoUn {\AoU{n}}
\def \Aotn {\Aot{n}}
\def \BoU#1{\BuildUn(#1){B}}
\def \Bot#1{B_i~(\forall i \ele #1)}
\def \BoUk {\BoU{k}}
\def \BoUm {\BoU{m}}
\def \BoUn {\BoU{n}}
\def \Botm {\Bot{m}}
\def \Botn {\Bot{n}}
\def \Cot#1{C_i\,(\forall i \ele #1)}
\def \CoU#1{\BuildUn(#1){C}}
\def \CoUm {\CoU{m}}
\def \CoUn {\CoU{n}}
\def \Cotn {\Cot{n}}
\def \DoU#1{\BuildUn(#1){D}}
\def \DoUn {\DoU{n}}
\def \DoUn {\DoU{n}}
\def \G_#1{`G\kern-1.5pt_{#1}}
\def \Ga_#1{`G'\kern-1.5pt_{#1}}
\def \intI	{{\Red \intsymb \mbox{\sl I}}}
\def \intE	{{\Red \intsymb \mbox{\sl E}}}
\def \unI	{{\Red \unsymb \mbox{\sl I}}}
\def \unE	{{\Red \unsymb \mbox{\sl E}}}
\def \Ax {{\Red \mbox{\sl Ax}}}
\def \arrI	{{\Red \arrow \mbox{\sl I}}}
\def \arrE	{{\Red \arrow \mbox{\sl E}}}
\def \dcol {\,{::}\,}
\def \arrR	{{\arr}\mbox{\sl R}}
\def \Weak	{\mbox{\sl W}}
\def \Iffs	{\mbox	{$\Leftarrow\kern-6\point\Rightarrow$}}
\def \Iff	{\Skip\,{\Leftarrow\kern-6\point\Rightarrow}\,\Skip}
\def \Thenone#1{\mathrel{{\Rightarrow}\,(#1)}}
\def \Then{\@ifnextchar{\bgroup}{\Thenone}{\mathrel{\Rightarrow}}}
\def \If{\mathrel{\Leftarrow}}
\def \Implyone#1{\mathrel{\Rightarrow\,(#1)}}
\def \Imply{\@ifnextchar{\bgroup}{\Implyone}{\mathrel{\Rightarrow}}}
\def \X {{\Red \mbox{$\mathcal{X}$}}}
\def \Xi {{\Red{\mathcal X}^{\textit{\footnotesize i}}}} 
\def \diagX {{\Red ^d \kern-.5\point \mathcal X}} 
\def \copyX {{\Red ^{\scriptsize \copyright} \kern-.5\point \mathcal X}} 
\def \LeftTop#1{\hspace{2\point}
\setbox66=\hbox{$#1$}%
\ifdim\ht66>5\point \setlength{\unitlength}{.1\point} 
	\psset{unit=.125\ht66,linewidth=0.4\point,linecolor=black}%
\else \setlength{\unitlength}{.2\point} 
	\psset{unit=.8\point,linewidth=0.4\point,linecolor=black}%
\fi
\ifmycolour\psset{linecolor=semcolour}\fi
\setbox67=\hbox{\put(0,1)%
{\psline{c-c}(0,0)(3,0)%
 \psline{cc-cc}(0,0)(0,-8)%
 \psline{cc-cc}(1.5,0)(1.5,-6)%
}}%
\raise7.25\point\box67
\kern3\point}
\def \RightBot{\kern2.75\point
\psset{unit=1\point,linewidth=0.4\point,linecolor=black}%
\ifmycolour\psset{linecolor=semcolour}\fi
\hbox{\put(.25,-2)%
{\psline{c-c}(0,0)(-3,0)%
 \psline{cc-cc}(-1.35,0)(-1.35,6)%
 \psline{cc-cc}(0,0)(0,8)%
}}\kern2.5\point}
\def \Rightbot{\kern3.25\point
\psset{unit=1\point,linewidth=0.4\point,linecolor=black}%
\ifmycolour\psset{linecolor=semcolour}\fi
\hbox{\put(0,-8)%
{\psline{c-c}(0,0)(-3,0)%
\psline{cc-cc}(-1.35,0)(-1.35,5)%
\psline{cc-cc}(0,0)(0,5)%
}}\kern1\point}
\def \OldSem[#1]{{\semcolour [\kern-1.75\point[}\kern.6\point \lterm{#1} \kern.6\point{\semcolour ]\kern-1.75\point]} }
\def \semEl#1{\LeftTop{#1}{#1}\RightBot}
\def \semThree#1#2#3{\setbox136=\hbox{{\scriptsize \Yellow $#2$}}\setbox137=\hbox{#3}%
	(#1)_{\copy136}\kern-\wd136%
	\raise5\point\box137}
\def \SemThree#1#2#3{%
	\setbox136=\hbox{{\scriptsize \termcolour ${#2}$}}%
	\setbox137=\hbox{#3}%
	\semEl{#1}_{\kern-1\point\termcolour \copy136}
	\ifdim\wd136>1pt\kern-\wd136\raise5\point\copy137
	\ifdim\wd136>\wd137 \kern-\wd137\kern\wd136\fi
	\fi}
\def \SemTwo#1#2{\@ifnextchar{\bgroup}{\SemThree{#1}{#2}}{\semEl{#1}\kern-1\point_{#2}}}
\def \Sem#1{\@ifnextchar{\bgroup}{\SemTwo{#1}}{\semEl{#1}}}
\def \ftn{\mbox{\Red \textsc{\scriptsize n}}}
\def \ftv{\mbox{\Red \textsc{\scriptsize v}}}
\def \SemN#1{\semEl{#1}\kern-1\point_{\ftn}}
\def \SemV#1{\semEl{#1}\kern-1\point_{\ftv}}
\def \SemN#1{\semEl{#1}\kern-1\point_{\ftn}}
\def \SemV#1{\semEl{#1}\kern-1\point_{\ftv}}
\def \Neg#1{\Overline{#1}} 
\def \NegBr(#1){({#1)^o}}
\def \NegNoBr#1{{{#1}^o}}
\def \Neg{\@ifnextchar({\NegBr}{\NegNoBr}}
\def \SemGN#1{\langle{#1}\rangle\kern-1\point_{\ftn}^{l}}
\def \SemDN#1{\langle{#1}\rangle\kern-1\point_{\ftn}^{r}}
\def \SemGV#1{\langle{#1}\rangle\kern-1\point_{\ftv}^{l}}
\def \SemDV#1{\langle{#1}\rangle\kern-1\point_{\ftv}^{r}}
\def \SemLx[#1] #2 {\SemThree{#1}{#2}{\mbox{\mysemcolour	\tiny $`l$\sc x}}}
\def \LmmtSemN#1{\semEl{#1}\kern-1\point_{\ftn}}
\def \LmmtSemV#1{\semEl{#1}\kern-1\point_{\ftv}}
\newcounter{sind}\setcounter{sind}0
\newcounter{tind}\setcounter{tind}0
\def \newsigma{\if@om\omfalse`s\setcounter{sind}0\setcounter{tind}0\else\stepcounter{sind}`s_{\thesind}\fi}
\def \newtau{\ifnum\thetind=0\stepcounter{tind}`t\else`t_{\thetind}\fi}
\def \lefttop {\leavevmode \kern2\adjustablepoint
\psset{unit=.9\adjustablepoint,linewidth=0.4\adjustablepoint,linecolor=Redcol}%
\ifmycolour\psset{linecolor=semcolour}\fi
\hbox{\psline{c-c}(0,9)(3,9)\psline{cc-cc}(1.5,9)(1.5,-2)\psline{cc-cc}(0,9)(0,-2)\psline{c-c}(0,-2)(3,-2)}\kern3.5\adjustablepoint}
\def \righttop{\kern2\adjustablepoint
\psset{unit=.9\adjustablepoint,linewidth=0.4\adjustablepoint,linecolor=Redcol}%
\ifmycolour\psset{linecolor=semcolour}\fi
\hbox{\psline{c-c}(1.5,9)(-1.5,9)\psline{cc-cc}(0,9)(0,2)\psline{cc-cc}(1.5,9)(1.5,2)}\kern1\adjustablepoint}
\def \leftsem {\leavevmode \kern2\point
\psset{unit=.9\point,linewidth=0.4\point,linecolor=black}%
\ifmycolour\psset{linecolor=semcolour}\fi
\hbox{\psline{c-c}(0,9)(3,9)\psline{cc-cc}(1.5,9)(1.5,-2)\psline{cc-cc}(0,9)(0,-2)\psline{c-c}(0,-2)(3,-2)}\kern3.5\point}
\def \rightsem{\kern2\point
\psset{unit=.9\point,linewidth=0.4\point,linecolor=black}%
\ifmycolour\psset{linecolor=semcolour}\fi
\hbox{\psline{c-c}(1.5,9)(-1.5,9)\psline{cc-cc}(0,9)(0,-2)\psline{cc-cc}(1.5,9)(1.5,-2)\psline{c-c}(1.5,-2)(-1.5,-2)}\kern1\point}
\def \Sub #1 #2 := #3{{#1} `<{#2}{\,:=\,}{#3}`> }
\def \VSub #1 #2 := #3{{#1} `<\Vec{{#2}{\,:=\,}{#3}}`> }
\def \exsub #1 := #2 { `<{#1}{\,:=\,}{#2}`> }
\def \imsub #1/#2 {[#1/#2]}
\def \LK{\textsc{lk}}
 \def \eqX	{\stackrel{\mbox{\scriptsize $\cal X$}}{=}}
\def \eqXCBN	{\mathbin{\eqX\kern-2pt_{\ftn}}}
\def \eqXCBV	{\mathbin{\eqX\kern-2pt_{\ftv}}}
\def \crXCBN	{\mathbin{\downarrow}\kern-.5\point_{\ftn}}
\def \crXCBV	{\mathbin{\downarrow}\kern-.5\point_{\ftv}}
\def \mutilde	{\tilde{\mu}}
\def \mt	{\mutilde}
\def \lmmt	{{\Red \overline{`l}`m\mutilde}}
\def \lmu	{{\Red `l`m}}
\def \ML	{{\Red \textsc{ml}}}
\def \eqX	{\stackrel{{\cal X}}{ = }}
\def \redX {\mathrel{{\rightarrow}\kern-2\point_{\cal X}}}
\def \rtcredX {\mathrel{{\rightarrow}\kern-1pt^*\kern-3\point_{\cal X}}}
\def \red {\mathrel{\arrow}}
\def \rtcred {\mathrel{\arrow\kern-1\point^*}}
\def \dred {\rightarrow\kern-.75em\rightarrow}
\def \redlmu {\mathrel{\arrow\kern-1\point_{`l`m}}}
 \def \redlazy {\mathrel{\Red \arrow_{\kern-.5pt\mbox{\scriptsize \sc l}}}}
 \def \rtcredlazy {\mathrel{\Red \arrow^{\ast}_{\kern-.5pt\mbox{\scriptsize \sc l}}}}
 \def \reds {\mathrel{\Red \arrow_{\kern-.5\point\mbox{\scriptsize \sc s}}}}
 \def \redx {\mathrel{\Red \arrow_{\kern-.5\point\mbox{\scriptsize {\bf x}}}}}
 \def \redxsub {\mathrel{\Red \arrow_{\kern-.5\point\mbox{\scriptsize {\rm :=}}}}}
 \def \rtcredx {\mathrel{\Red \arrow^{*}_{\kern-.5\point\mbox{\scriptsize {\bf x}}}}}
 \def \rtcredxsub {\mathrel{\Red \arrow^{*}_{\kern-.5\point\mbox{\scriptsize {\rm :=}}}}}
 \def \redxV {\mathrel{\Red \arrow_{\kern-.5\point\mbox{\scriptsize {\bf x}{\sc v}}}}}
 \def \redxl {\mathrel{\Red \arrow_{\kern-.5\point\mbox{\scriptsize {\bf x}{\sc l}}}}}
 \def \redxs {\mathrel{\Red \arrow_{\kern-.5\point\mbox{\scriptsize {\bf x}{\sc s}}}}}
 \def \eqxs {\mathrel{\Red =_{\kern-.5\point\mbox{\scriptsize {\bf x}{\sc s}}}}}
 \def \rtcredxs {\mathrel{\Red \arrow^{*}_{\kern-.5\point\mbox{\scriptsize {\bf x}{\sc s}}}}}
 \def \redxsa {\mathrel{\Red \arrow'_{\kern-.5\point\mbox{\scriptsize {\bf x}{\sc s}}}}}
 \def \tcreds {\mathrel{\Red \arrow^{+}_{\kern-.5\point\mbox{\scriptsize \sc s}}}}
 \def \tcredx {\mathrel{\Red \arrow^{+}_{\kern-.5\point\mbox{\scriptsize {\bf x}}}}}
 \def \tcredxsub {\mathrel{\Red \arrow^{+}_{\kern-.5\point\mbox{\scriptsize {\rm :=}}}}}
 \def \tcredl {\mathrel{\Red \arrow^{+}_{\kern-.5\point\mbox{\scriptsize {\sc l}}}}}
 \def \tcredxl {\mathrel{\Red \arrow^{+}_{\kern-.5\point\mbox{\scriptsize {\bf x}{\sc l}}}}}
 \def \tcredxs {\mathrel{\Red \arrow^{+}_{\kern-.5\point\mbox{\scriptsize {\bf x}{\sc s}}}}}
 \def \rtcreds {\mathrel{\Red \arrow^{\ast}_{\kern-.5\point\mbox{\scriptsize \sc s}}}}
 \def \rtcredl {\mathrel{\Red \arrow^{\ast}_{\kern-.5\point\mbox{\scriptsize {\sc l}}}}}
 \def \rtcredx {\mathrel{\Red \arrow^{\ast}_{\kern-.5\point\mbox{\scriptsize {\bf x}}}}}
 \def \rtcredxl {\mathrel{\Red \arrow^{\ast}_{\kern-.5\point\mbox{\scriptsize {\bf x}{\sc l}}}}}
 \def \rtcredxs {\mathrel{\Red \arrow^{\ast}_{\kern-.5\point\mbox{\scriptsize {\bf x}{\sc s}}}}}
\def \redes {\mathrel{\Red \arrow_{\kern-.5\point\mbox{\scriptsize \sc es}}}}
 \def \dreds {\mathrel{\Red \arrow\kern-8\point\arrow_{\kern-.5\point\mbox{\scriptsize \sc s}}}}
\def \Pair<#1,#2>{`<{#1},{#2}`>}
\def \Group<#1>{\langle{#1}\rangle}
\def \IH {\textit{IH}}
\def \bp(#1){{\typecolour \It{bp}({\termcolour #1})}}
\def \bs(#1){{\typecolour \It{bs}({\termcolour #1})}}
\def \BV#1{\textit{bv}(#1)}
\def \bv(#1){\BV{#1}}
\def \fp(#1){\textit{fp}({\termcolour #1})}
\def \fs(#1){\textit{fs}\skp({\termcolour #1})}
\def \fc(#1){\textit{fc}\skp({\termcolour #1})}
\def \FV#1{\textit{fv}\skp({\termcolour #1})}
\def \fv(#1){\FV{#1}}
\def \Mid
\def \Langle{\kern1\adjustablepoint
\psset{unit=1\adjustablepoint,linewidth=0.4\adjustablepoint,linecolor=ltermcol}%
\put(0,2.5){\psline{c-c}(0,0)(1.7,4.9)%
\pscurve{c-c}(1.7,4.9)(.7,0)(1.7,-4.9)
\psline{c-c}(0,0)(1.7,-4.9)
}\kern3\adjustablepoint}
\def \Rangle{\psset{unit=1\adjustablepoint,linewidth=0.4\adjustablepoint,linecolor=ltermcol}%
\kern3\adjustablepoint
\put(0,2.5){\psline{c-c}(0,0)(-1.7,4.9)%
\pscurve{c-c}(-1.7,4.9)(-.8,0)(-1.7,-4.9)+
\psline{c-c}(0,0)(-1.7,-4.9)
}\kern\adjustablepoint}
\def \PiCell#1#2{\setbox171=\hbox{$#1$}\setbox172=\hbox{$#2$}
{\ltermcolour
\Langle 
	\ifdim\wd171>24\point\skp{\box171}\skp\else{\box171}\fi
\kern1pt{\mid}\kern1pt
	\ifdim\wd172>24\point\skp{\box172}\skp\else{\box172}\fi
\Rangle 
}}
\def \Cell#1#2{\langle#1\skp{\mid}\skp#2\rangle}
\def \cell<#1|#2>{\Cell{#1}{#2}}
\def \repl #1[#2/#3]{\lterm{#1}{\skp\Black [}\lterm{#2}{\Black /}\lterm{#3}{\Black ]} }
\def \term#1 {{\termcolour #1}}
\def \ass#1:#2 {{\termcolour #1}{\black :}{\typecol #2} }
\def \lass#1:#2 {\lterm {#1}{\black :}{\typecol #2} }
\def \type#1{{\typecolour #1}}
\def \lterm#1{{\ltermcolour #1}}
\def \Rule(#1){{\blue (#1) }}
\def \Stat#1:#2{{\ltermcolour #1}{:}{\typecolour #2}}
\def \lstat#1#2{{\ltermcolour #1}\,{:\,}{\typecolour #2}}
\def \turn	{{\black {\vdash}}}
\def \Turn {\mathrel{\turn}}
\def \TurnSimple {\mathrel{\turn}}
\def \TurnD {\mathrel{\turn\kern-3\point_{\mbox{\tinyfont $\textit{\black D}$}}}}
\def \TurnI {\mathrel{\turn\kern-3\point_{\mbox{\tinyfont $\black \cap$}}}}
\def \TurnL {\mathrel{\turn\kern-.5\point_{`l}}}
\def \TurnLK {\mathrel{\Yellow \turn\kern-2\point_{\ftsc{lk}}}}
\def \TurnLKarr {\mathrel{\Yellow \turn\kern-2\point_{\ftsc{lk}({\rightarrow})}}}
\def \TurnR {\mathrel{\Yellow \turn\kern-2\point_{\ftsc{r}}}}
\def \TurnLN {\mathrel{\Yellow \turn\kern-2\point_{\ftsc{ln}}}}
\def \TurnLV {\mathrel{\Yellow \turn\kern-2\point_{\ftsc{lv}}}}
\def \TurnLmmt {\mathrel{\black \turn_{\sclmmt}}}
\def \TurnLmmtSN {\mathrel{\turn_{\sclmmt}^{\ftsc{sn}}}}
\def \TurnLmmtN {\mathrel{\turn\kern-2\point_{\black \ftsc{n}}}}
\def \TurnLmmtV {\mathrel{\turn\kern-2\point_{\black \ftsc{v}}}}
\def \TurnLmmtmin {\mathrel{\turn\kern-2\point_{\black \ftsc{m}}}}
\def \TurnLmmtmin {\mathrel{\turn}}
\def \TurnLmu {\mathrel{\turn\kern-\point_{`l`m}}}
\def \TurnLmuIU {\mathrel{\,\turn\kern-2\point_{`l`m}^{\,\raise 1\point\hbox{\tinyfont $\cap\cup$}}\,}}
\def \TurnLx {\mathrel{\turn_{`l\mbox{\scriptsize \sf x}}}}
\def \TurnMCL {\mathrel{\Yellow \turn\kern-2\point_{\ftsc{mcl}}}}
\def \TurnN {\mathrel{\Yellow \turn\kern-2\point_{\ftsc{n}}}}
\def \TurnNf {\mathrel{\Yellow \turn\kern-2\point_{\ftrm{n}}}}
\def \TurnV {\mathrel{\Yellow \turn\kern-2\point_{\ftsc{v}}}}
\def \TurnVf {\mathrel{\Yellow \turn\kern-2\point_{\ftrm{v}}}}
\def \TurnX {\mathrel{\turn\kern-2\point_{\cal X}}}
\def \TurnP {\mathrel{\turn\kern-3\point_p}}
\def \TurnM {\mathrel{\turn\kern-4\point_{\mbox{\tinyfont $\cal M$}}}}
\def \TurnMc {\mathrel{\turn\kern-4\point_{\mbox{\tinyfont $\cal M^{\tinysc{c}}$}}}}
\def \TurnMe {\mathrel{\turn\kern-4\point_{\mbox{\tinyfont $\cal M^{\tinysc{e}}$}}}}
\def \TurnMIU {\mathrel{\turn\kern-4\point_{\mbox{\tinyfont $\cal M^{\cap\cup}$}}}}
\def \TurnMI {\mathrel{\turn\kern-4\point_{\mbox{\tinyfont $\cal M^{\cap}$}}}}
\def \MIU {{\Red \cal M^{\cap\cup}}}
\def \sclmmt {\mbox{
\tiny $\Overline{`l}`m\mt$}}
\def \Der#1#2{\@ifnextchar{\bgroup}{\DerThree{#1}{#2}}{\DerTwo{#1}{#2}}}
\def \DerTwo#1#2{{\typecolour #1 \Turn #2}}
\def \DerThree#1#2#3{{\typecolour #2 \Turn #3}}
\def \Stoup#1#2{\lstat{#1}{#2}}
\def \StoupR#1#2{\lstat{#1}{#2} \Mid}
\def \StoupR#1#2{\framebox{$#1\,{:}\,#2$}~}
\def \StoupR#1#2{\StoupFrame{\lstat{#1}{#2}}}
\def \Stoup#1#2{\lstat{#1}{#2}}
\def \StoupR#1#2{\lstat{#1}{#2}\Skip\Mid}
\def \DerLmuThree#1#2#3{#1 \TurnLmu \Stoup{#2}{#3} }
\def \DerLmuFour#1#2#3#4{#1 \TurnLmu \StoupR{#2}{#3} #4}
\def \DerLmu#1#2#3{\@ifnextchar{\bgroup}{\DerLmuFour{#1}{#2}{#3}}{\DerLmuThree{#1}{#2}{#3}}}
\def \derLmuFour #1 |- #2 : #3 | #4 {\type{#1} \TurnLmu \Stoup{#2}{#3} \Mid \type{#4} }
\def \derLmuNormal #1 |- #2 : #3 {\@ifnextchar|{\derLmuFour #1 |- #2 : #3 }{#1 \TurnLmu \Stoup{#2}{#3} }}
\def \derLmuCmd #1 |- #2 | #3 {\type{#1} \TurnLmu \lterm{#2} \Mid \type{#3} }
\def \derLmuBasis #1 |- #2 {\@ifnextchar:{\derLmuNormal #1 |- #2 }{\derLmuCmd #1 |- #2 }}
\def \derLmu {\@ifnextchar|{\derLmuBasis {\emptyset} }{\derLmuBasis }}
\def \derLmuIU #1 {\def \TurnLmu {\TurnLmuIU}\derLmu {#1} }
\def \derMCLFour #1 |- #2 | #3 {#1 \TurnMCL {#2} \mid {#3} }
\def \derMCLNormal #1 |- #2 {\@ifnextchar|{\derMCLFour #1 |- #2 }{#1 \TurnMCL {#2} }}
\def \derdeMCL #1 |- | #2 {#1 \TurnMCL \mid {#2} }
\def \derMCL #1 |- {\@ifnextchar|{\derdeMCL #1 |- }{\derMCLNormal {#1} |- }}
\def \DerLmmt#1#2{\@ifnextchar{\bgroup}{\DerLmmtThree{#1}{#2}}{\DerLmmtTwo{#1}{#2}}}
\def \DerLmmtTwo#1#2{{\typecolour #1 \TurnLmmt #2}}
\def \DerLmmtThree#1#2#3{#1~{:}~{\typecolour #2 \TurnLmmt #3}}
\def \derLmmtTwo #1 |- #2 {{\typecolour {#1} \TurnLmmt {#2}}}
\def \derLmmtCommand #1 : #2 |- #3 {{\ltermcolour {#1}}~{\black :}~{\typecolour {#2} \TurnLmmt {#3}}}
\def \derLmmtTerm #1 |- #2 : #3 | #4 {{\typecolour {#1} \TurnLmmt {\ltermcolour #2}\skipper{\black :}\skipper{#3} ~{\black \mid}~ {#4} }}
\def \derLmmtContext #1 | #2 : #3 |- #4 {{\typecolour {#1}~{\black \mid}~{\ltermcolour #2}\skipper{\black :}\skipper{#3} \TurnLmmt {#4} }}
\def \derLmmtActivated #1 |{\@ifnextchar-{\derLmmtTerm {#1} |}{\derLmmtContext {#1} | }}
\def \derLmmtNamedDer #1 :: {{#1} \dcol \derLmmt }
\def \derLmmtColon #1 :{\@ifnextchar:{\derLmmtNamedDer #1 :}{\derLmmtCommand {#1} : }}
\def \derLmmt #1 {\@ifnextchar:{\derLmmtColon {#1} }{\derLmmtActivated {#1} }}
\def \derLmmtV #1 {\def \TurnLmmt{\TurnLmmtV}\derLmmt {#1} }
\def \derLmmtN #1 {\def \TurnLmmt{\TurnLmmtN}\derLmmt {#1} }
\def \derLmmtSN #1 {\def \TurnLmmt{\TurnLmmtSN}\derLmmt {#1} }
\def \derLmmtmin #1 {\def \TurnLmmt{\TurnLmmtmin} \derLmmt {#1} }
\def \derLmmtM #1 {\def \TurnLmmt{\TurnM} \derLmmt {#1} }
\def \derLmmtMc #1 {\def \TurnLmmt{\TurnMc} \derLmmt {#1} }
\def \derL #1 |- #2 : #3 {{\typecolour #1 \TurnL {\ltermcolour #2}\,{\black :}\,#3}}
\def \derLx #1 |- #2 : #3 {{\typecolour #1 \TurnLx {\ltermcolour #2}\,{\black :}\,#3}}
\def \derI #1 |- #2 : #3 {{\typecolour #1 \TurnI {\ltermcolour #2}\,{\black :}\,#3}}
\def \der #1 |- #2 : #3 {{\typecolour #1 \TurnSimple {\ltermcolour #2}\,{\black :}\,#3}}
\def \derLK #1 |- #2 {{\typecolour #1}	\TurnLK	{\typecolour #2}}
\def \derLKarr #1 |- #2 {{\typecolour #1}	\TurnLKarr	{\typecolour #2}}
\def \deriv #1 |- #2 {{\typecolour #1}	\Turn	{\typecolour #2}}
\def \DP {\copy162}
\def \DPa {{\neutral \raise-.7\point\hbox{~\large :~}}}
\def \PDer#1#2#3{{\termcolour #1} \DP {\typecolour #2} \Turn {\typecolour #3}}
\def \pDer#1:#2 |- #3 {{\termcolour #1} \DP {\typecolour {#2} \Turn {#3}}}
\def \InfPDerFour#1#2#3#4{ \InfBox{#1}{ \PDer{#2}{#3}{#4} } }
\def \InfPDer#1#2#3{\@ifnextchar{\bgroup}{\InfPDerFour{#1}{#2}{#3}}{\InfPDerFour{}{#1}{#2}{#3}}}
\def \PDerN#1#2#3{\typecolour {\termcolour #1 }\DP {\typecolour #2} \TurnN {\typecolour #3}}
\def \derXaux #1 : #2 |- #3 {{\termcolour #1 } \DP {\typecolour #2}	\TurnX	{\typecolour #3}}
\def \derX #1 : {\@ifnextchar|{\derXaux {#1} : {} }{\derXaux {#1} : }}
\def \derXR #1 : #2 |- #3 {{\termcolour #1 } \DP {\typecolour #2}	\TurnR	{\typecolour #3}}
\def \derXRN #1 : #2 |- #3 {{\termcolour #1 } \DP {\typecolour #2}	\TurnLN	{\typecolour #3}}
\def \derXRV #1 : #2 |- #3 {{\termcolour #1 } \DP {\typecolour #2}	\TurnLV	{\typecolour #3}}
\def \derXN #1 : #2 |- #3 {{\termcolour #1 } \DP {\typecolour #2}	\TurnN	{\typecolour #3}}
\def \derXNf #1 : #2 |- #3 {{\termcolour #1 } \DP {\typecolour #2}	\TurnNf	{\typecolour #3}}
\def \derXV #1 : #2 |- #3 {{\termcolour #1 } \DP {\typecolour #2}	\TurnV	{\typecolour #3}}
\def \derXVf #1 : #2 |- #3 {{\termcolour #1 } \DP {\typecolour #2}	\TurnVf	{\typecolour #3}}
\def \PDerV#1#2#3{\typecolour {\termcolour #1 }\DP {\typecolour #2} \TurnV {\typecolour #3}}
\def \InfPDerFourN#1#2#3#4{ \InfBox{#1}{ \PDerN{#2}{#3}{#4} } }
\def \InfPDerN#1#2#3{\@ifnextchar{\bgroup}{\InfPDerFourN{#1}{#2}{#3}}{\InfPDerFour{}{#1}{#2}{#3}}}
\def \InfPDerFourV#1#2#3#4{ \InfBox{#1}{ \PDerV{#2}{#3}{#4} } }
\def \InfPDerV#1#2#3{\@ifnextchar{\bgroup}{\InfPDerFourV{#1}{#2}{#3}}{\InfPDerFour{}{#1}{#2}{#3}}}
\def \derPure #1 : #2 |- #3 {{\termcolour #1 } \DP {\typecolour #2}	\Turn_{\kern-2pt\ftsc{p}}	{\typecolour #3}}
\def \derPureN #1 : #2 |- #3 {{\termcolour #1 } \DP {\typecolour #2}	\Turn_{\kern-2pt\ftsc{pn}}	{\typecolour #3}}
\def \derPureV #1 : #2 |- #3 {{\termcolour #1 } \DP {\typecolour #2}	\Turn_{\kern-2pt\ftsc{pv}}	{\typecolour #3}}
\def \derDP #1 : #2 |- #3 {{\termcolour #1 } \DP {\typecolour #2}	\Turn	{\typecolour #3}}
\def \LogicDer #1 |- #2 { {\typecolour #1}	\Turn	{\typecolour #2}}
\def \InfDerX #1 :: #2 : #3 |- #4 {\InfPDer{#1}{#2}{#3}{#4}}
\def \InfDer #1 :: #2 |- #3 : #4 {\InfLDer{#1}{#2}{#3}{#4}}
\def \XIn#1#2{{\widehat{#1}}\hspace*{.75pt}{#2}}
\def \XOut#1#2{{#1}\hspace*{1pt}{\widehat{#2}}}
\newfont{\oldmath}{cmsy10 at 11pt}
\newfont{\largeoldmath}{cmsy10 at 15pt}
\def \Dagger	{\mathrel{\mbox{\oldmath\symbol{121}}}}
\def \RedDagger	{\mathrel{\raise-1\point\hbox{\lightredcol \largeoldmath \symbol{121}}}}
\def \DaggerA{{\Yellow \Dagger\kern-3\point_{\mbox{\scriptsize \sc a}}}}
\def \DaggerLog{\kern-1\point{\Yellow \Dagger\kern-1.5\point_{\mbox{\scriptsize \sc l}\kern-1.5\point}}}
\def \DaggerLeft{\mathop{\kern-1\point{\specialcol \raise2\point\hbox{\rotatebox{-30}{$\Dagger$}}}\kern-1\point}}
\def \DaggerRight{\mathop{\kern-1\point{\specialcol \raise0\point\hbox{\rotatebox{30}{$\Dagger$}}}\kern-1\point}}
\def \DaggerV{\mathop{\,\Yellow \Dagger_{\mbox{\scriptsize \sc v}}}}
\def \DaggerN{\mathop{\,\Yellow \Dagger_{\mbox{\scriptsize \sc n}}}}
\def \DaggerL{\DaggerLeft}
\def \DaggerR{\DaggerRight}
\def \expT#1#2#3#4{\XOut{\XIn{#1}{#2}}{#3}\kern.2mm{`.}\kern.2mm#4}
\def \medT#1#2#3#4#5{\setbox111=\hbox{$#1$}\setbox112=\hbox{$#5$}%
\ifdim\wd111<10pt%
	\ifdim\wd112<10pt%
		\XOut{\box111}{#2}\,[#3]\,\XIn{#4}{\box112}%
	\else
		\XOut{\box111}{#2}~[#3]~\XIn{#4}{\box112}
	\fi
\else
	\XOut{\box111}{#2}~[#3]~\XIn{#4}{\box112}
\fi}
\def \medTdl#1#2#3#4#5{{\termcolour %
	\XOut{(#1)}{#2}~[#3] }\quad \\ \hfill {\termcolour \XIn{#4}{(#5)} }
}
\def \cutT#1#2#3#4{\XOut{#1}{#2} \Dagger \XIn{#3}{#4}}
\def \ColcutT#1#2#3#4{\XOut{#1}{#2} \RedDagger \XIn{#3}{#4}}
\def \cutlog#1#2#3#4{\XOut{#1}{\orange #2}{\orange \Dagger \XIn{#3}{#4}}}
\def \cutA#1#2#3#4{\XOut{#1}{#2}\DaggerA \XIn{#3}{#4}}
\def \invLT#1#2#3{{#1}\kern.2mm{`.}\kern.2mm\XOut{#2}{#3}}
\def \invRT#1#2#3{\XIn{#1}{#2}\kern.2mm{`.}\kern.2mm{#3}}
\def \negL #1 . #2 #3 {\NegL{#1}{#2}{#3}}
\def \negR #1 #2 . #3 {\NegR{#1}{#2}{#3}}
\def \negLT#1#2#3{{#1}\kern.2mm{`.}\kern.2mm\XOut{#2}{#3}}
\def \negRT#1#2#3{\XIn{#1}{#2}\kern.2mm{`.}\kern.2mm{#3}}
\def \pairT#1#2#3#4#5{\Group<{#1}\widehat{#2},{#3}\widehat{#4}>\kern.2mm{`.}\kern.2mm{#5}}
\def \orLT#1#2#3#4#5{{#1}\kern.2mm{`.}\kern.2mm (\widehat{#2}{#3}\kern.2mm{+}\kern.2mm\widehat{#4}{#5}) }
\def \inLT#1#2#3#4{{#1}(\widehat{#2}\kern.2mm{+}\kern.2mm{#3})\kern.2mm{`.}\kern.2mm{#4}}
\def \inRT#1#2#3#4{{#1}\,({#2}\kern.2mm{+}\kern.2mm\widehat{#3})\kern.2mm{`.}\kern.2mm{#4}}
\def \orRPT#1#2#3#4{{#1}\,(\widehat{#2}\kern.3mm{\mid}\kern.3mm\widehat{#3})\kern.2mm{`.}\kern.2mm{#4}}
\def \newconRT#1#2#3#4{\Group<{#1},{#2}>{#3}\kern.2mm{`.}\kern.2mm{#4}}
\def \newconLT#1#2#3#4#5{#1\kern.2mm{`.}\kern.2mm\Group<{#2}\widehat{#3},{#4}\widehat{#5}>}
\def \Weak{\mysl{W}}
\def \Thin{\mysl{T}}
\def \InvLterm#1#2#3{{\termcolour \if@om \omfalse\invLT{#1}{#2}{#3}\omtrue
	\else	\skp(\invLT{#1}{#2}{#3})\fi}}
\def \InvRterm#1#2#3{{\termcolour \if@om \omfalse\invRT{#1}{#2}{#3}\omtrue
	\else	\skp(\invRT{#1}{#2}{#3})\fi}}
\def \NegL#1#2#3{{\termcolour \if@om \omfalse\negLT{#1}{#2}{#3}\omtrue
	\else	\skp(\negLT{#1}{#2}{#3})\fi}}
\def \NegR#1#2#3{{\termcolour \if@om \omfalse\negRT{#1}{#2}{#3}\omtrue
	\else	\skp(\negRT{#1}{#2}{#3})\fi}}
\def \NewConR#1#2#3#4{{\termcolour \if@om \omfalse\newconRT{#1}{#2}{#3}{#4}\omtrue	\else	(\newconRT{#1}{#2}{#3}{#4})\fi}}
\def \NewConL#1#2#3#4#5{{\termcolour \if@om \omfalse\newconLT{#1}{#2}{#3}{#4}{#5}\omtrue	\else	(\newconLT{#1}{#2}{#3}{#4}{#5})\fi}}
\def \nonconnectable{\mbox{\tiny $\Box$}}
\def \nonconnectable{\diamond}
\def \newconR < {\@ifnextchar{o}{\newconRR < }{\newconRL < }}
\def \newconRL < #1 , #2 > #3 . #4 {\NewConR{\widehat{#1}}{\nonconnectable}{#3}{#4}}
\def \newconRR < #1 , #2 > #3 . #4 {\NewConR{\nonconnectable}{\widehat{#2}}{#3}{#4}}
\def \newconL #1 . < #2 #3 , #4 #5 > {\NewConL{#1}{#2}{#3}{#4}{#5}}
\newlength\harpoonlength
\def \longharpoon#1{\psset{unit=.7\point,linewidth=0.4\point,linecolor=ltermcol}%
\harpoonlength#1\addtolength{\harpoonlength}{-1\point}%
\psline{cc-cc}(-1,1.25)(-\harpoonlength,1.25)%
\pscurve{cc-cc}(-0.7,1.25)(-1.5,1.44)(-2,1.8)(-2.5,2.3)(-3,3.1)
\vbox to 2\point{}%
}
\def \Vec#1{\setbox155=\hbox{$#1$}%
\leavevmode\copy155\raise\ht155\hbox{\longharpoon{\wd155}}}
\def \longarrow#1{\psset{unit=1\point,linewidth=0.35\point}%
\psline{cc-cc}(0,0)(#1,0)%
\hspace*{#1}%
\pscurve{cc-cc}(0,0)(-1.1,.4)(-2.5,2.2)%
\pscurve{cc-cc}(0,0)(-1.1,-.4)(-2.5,-2.2)}
\def \Rel#1#2{\setbox155=\hbox{\scriptsize $#1$}\setbox156=\hbox{$#2$}%
\mathrel{\raise5.5pt\copy155\kern-1\wd155%
\kern-.5\wd155\kern-.5\wd156\raise-2pt\copy156\kern-.5\wd156\kern.5\wd155}}
\def \stackrel#1#2{\setbox155=\hbox{\scriptsize $#1$}\setbox156=\hbox{$#2$}%
\mathrel{\copy156\kern-.5\wd156\kern-.5\wd155\raise4.5\point\hbox{\copy155}\kern-.5\wd155\kern.5\wd156}}
\def \Overline#1{%
 \ifmycolour\psset{unit=1\point,linewidth=0.4\point,linecolor=termcol}%
 \else\psset{unit=1\point,linewidth=0.35\point,linecolor=black}
 \fi
 \setbox155=\hbox{$#1$}\leavevmode
 \raise\ht155\hbox{\psline{c-c}(.05\wd155,1.25)(.95\wd155,1.25)}\box155%
}
\def \Underline#1{%
 \ifmycolour\psset{unit=1\point,linewidth=0.4\point,linecolor=yellow}%
 \else\psset{unit=1\point,linewidth=0.35\point,linecolor=black}
 \fi
 \setbox156=\hbox{${#1}$}\leavevmode
 \raise-\dp156\hbox{\psline{c-c}(.05\wd156,-0.75)(.95\wd156,-0.75)}\box156 }
\def \ul#1{\Underline{\mbox{\scriptsize $#1$}}}
\def \CH#1{\kern1\point\lefttop{\mbox{\Purple $#1$}}\righttop\kern1\point}
\def \LtoLMMT#1{\kern1\point\lefttop{#1}\righttop\kern1\point}
\def \CH#1{\raise1.5\point\hbox{\footnotesize ${\mid}\kern-1.5\point[$}%
#1
\raise1.5\point\hbox{\footnotesize $]\kern-1.5\point{\mid}_{`l}$}}
\def \CutGraph{\@ifnextchar[{\CutGraphN}{\CutGraphN[\GCut]}}
\def \CutGraphN[#1]#2#3#4#5{%
\begin{psmatrix}[rowsep=1ex,colsep=.05ex]
&&&[name=cutroot]&\\[3ex]
&&&[name=cut]#1&\\
&[name=P]#2&&&&[name=Q]#5& \\
&&[name=a]#3&&[name=x]#4\\&
\end{psmatrix}
 \ncline{cutroot}{cut}
\ncline{cut}{P}
\ncline{cut}{a}
\ncline{cut}{x}
\ncline{cut}{Q}
}
\def \V#1{\setbox71=\hbox{#1}\setbox72=\hbox{\v{}}%
\copy71\kern-.5\wd71\kern-.3\wd72\raise\ht71\hbox{\raise.2em\hbox{\lower\ht72\hbox{\copy72}}}\kern-.7\wd72\kern.5\wd71}
\def \ByDef {\mathrel{\copy149}}
\def \Cut{\@ifnextchar{\bgroup}{\CutTwo}{{\mysl{cut}}}}
\def \Exp	{\@ifnextchar{\bgroup}{\ExpT}{{\mysl{exp}}}}
\def \Imp	{\@ifnextchar{\bgroup}{\MedT}{{\mysl{imp}}}}
\def \Med	{\@ifnextchar{\bgroup}{\MedT}{{\mysl{med}}}}
\def \Cap	{\@ifnextchar{\bgroup}{\CapT}{{\mysl{cap}}}}
\def \CutL#1#2{\@ifnextchar{\bgroup}{\CutLeft{#1}{#2}}{\widehat{#1} \DaggerL \widehat{#2}}}
\def \CutR#1#2{\@ifnextchar{\bgroup}{\CutRight{#1}{#2}}{\widehat{#1} \DaggerR \widehat{#2}}}
\def \CutTwo#1#2{\@ifnextchar{\bgroup}{\CutT{#1}{#2}}{\widehat{#1} \Dagger \widehat{#2}}}
\def \ColCut#1#2{\@ifnextchar{\bgroup}{\ColCutT{#1}{#2}}{\widehat{#1} \Dagger \widehat{#2}}}
\def \Cutdl#1#2#3#4{\XOut{(#1)}{#2} \Dagger\\ \hfill \XIn{#3}{(#4)}}
\def \TrTwo#1#2{\Sem{\ltermcolour #1}{\termcolour #2}^{\sclmmt}}
\def \Tran#1{\@ifnextchar{\bgroup}{\TrTwo{#1}}{\TrTwo{#1}{}}}
\newdimen \GeneWidth \newdimen \GeneW \newdimen \HGeneW
\newdimen \GeneHeight \newdimen \GeneH \newdimen \HGeneH
\def \StoupFrameR#1{\setbox159=\hbox{$#1$}%
\PictW\wd159\divide\PictW\unitlength\advance\PictW3%
\PictH\ht159\multiply\PictH18\divide\PictH10\divide\PictH\unitlength
\,\raise-4\unitlength\hbox{\PSFrame(\PictW,\PictH)}%
\kern1\point\box159\,,}
\def \StoupFrameL#1{\setbox159=\hbox{$#1$}%
\PictW\wd159\divide\PictW\unitlength\advance\PictW3%
\PictH\ht159\multiply\PictH18\divide\PictH10\divide\PictH\unitlength
,\,\raise-4\unitlength\hbox{\PSFrame(\PictW,\PictH)}%
\kern1\point\box159\,}
\def \Gene#1{\unitlength.1\point\setbox140=\hbox{$#1$}%
\GeneHeight\ht140\advance\GeneHeight55\unitlength%
\GeneWidth\wd140\advance\GeneWidth50\unitlength%
\PictW\GeneWidth\divide\PictW\unitlength%
\PictH\GeneHeight\divide\PictH\unitlength%
\ifdim\wd140>10\point
\begin{picture}(\PictW,\PictH)%
 \put(0,0)	{\makebox(\PictW,\PictH){\box140}}%
 \put(0,0)	{\PSFrame(\PictW,\PictH)}%
\end{picture}
\else \box140
\fi}
\def \InvGene#1{\unitlength.1\point%
\setbox140=\hbox{$#1$}%
\GeneHeight\ht140\advance\GeneHeight60\unitlength
\GeneWidth\wd140\advance\GeneWidth50\unitlength
\PictW\GeneWidth\divide\PictW\unitlength
\PictH\GeneHeight\divide\PictH\unitlength
\begin{picture}(\PictW,\PictH)
 \put(0,0)	{\makebox(\PictW,\PictH){\box140}}
 \put(0,0)	{\PSFrame(0,0)}
\end{picture}}
\newcommand{\DiagCap}[2]{\unitlength.1\point%
\begin{picture}(360,120)(-180,-55)
	\put(-180,0)	{\Yellow \vector(1,0){140}}
	\put(-110,50)	{\makebox(0,0){\small $\termcolour #1$}}
	\put(-40,-30)	{\PSBox(80,80)}
	\put(110,50)	{\makebox(0,0){\small $\termcolour #2$}}
	\put(40,0)	{\Yellow	\vector(1,0){140}}
 \end{picture}}
\def \Diagcaps<#1,#2>{\DiagCap{#1}{#2}}
\newcommand{\DiagExp}[4]{\unitlength.1\point%
\setbox143=\hbox{{\termcolour \small $#2$}}%
\GeneWidth\wd143
	\advance\GeneWidth320\unitlength
	\BoxW\GeneWidth \divide\BoxW\unitlength \HBoxW\BoxW \divide\HBoxW2%
	\advance\GeneWidth200\unitlength \PictW\GeneWidth \divide\PictW\unitlength \HPictW\PictW \divide\HPictW2%
\GeneHeight\ht143 \advance\GeneHeight 80\unitlength
	\BoxH\GeneHeight \divide\BoxH\unitlength \HBoxH\BoxH \divide\HBoxH2%
	\advance\GeneHeight70\unitlength
\PictH\GeneHeight \divide\PictH\unitlength \HPictH\PictH \divide\HPictH2%
\HOffset\wd143 \divide\HOffset\unitlength \divide\HOffset2%
\VOffset\HPictW \divide\VOffset\unitlength \divide\VOffset2 \advance\VOffset20%
 \def \ContentExp {\begin{picture}(\PictW,\PictH)(-\HPictW,-\HPictH)
 \put(-50,0){%
 \VOffset\ht143\divide\VOffset\unitlength\divide\VOffset2\advance\VOffset50%
 \makebox(0,0){\Gene{\begin{picture}(\BoxW,\BoxH)(-\HBoxW,20)
		\put(-\HOffset,\VOffset)	{
		\put(-160,-25) {\Yellow \vector(1,0){150}}
		\put(-85,22)	{\makebox(0,0){\termcolour \small $\widehat{#1}$}}
	}%
	\put(0,\VOffset)	{\makebox(0,0){\box143}}
		\put(\HOffset,\VOffset) {
		\put(10,-25)	{\Yellow \vector(1,0){150}}
		\put(75,20)	{\makebox(0,0){\termcolour \small $\widehat{#3}$}}
	}
 \end{picture}}}}
 \HOffset\HBoxW\advance\HOffset-40
 \put(\HOffset,-15)	{\Yellow \vector(1,0){130}}
 \advance\HOffset65
 \put(\HOffset,35) {\makebox(0,0){\termcolour \small $#4$}}
 \end{picture}}
\if@om \omfalse 
\ContentExp 
\omtrue \else \Gene{\ContentExp} \fi
}
\newdimen \MedW \newdimen \MedH
\newcommand{\DiagMed}[5]{\unitlength.1\point%
\setbox144=\hbox{{\termcolour \small $#1$}}%
\setbox145=\hbox{{\termcolour \small $#5$}}%
\setbox146=\hbox{\termcolour \small $#3$}%
\MedW \wd144 \advance \MedW \wd145 
\advance \MedW 480\unitlength
\ifdim \ht144 > \ht145 \MedH \ht144 \else \MedH \ht145 \fi \advance \MedH 70\unitlength
\BoxW\MedW\divide\BoxW\unitlength\HBoxW\BoxW\divide\HBoxW2%
\BoxH\MedH\divide\BoxH\unitlength\HBoxH\BoxH\divide\HBoxH2%
\GeneWidth \MedW \advance \GeneWidth 240\unitlength
\GeneHeight \MedH \advance \GeneHeight 70\unitlength
\PictW \GeneWidth \divide \PictW \unitlength \HPictW \PictW \divide \HPictW 2%
\PictH \GeneHeight \divide \PictH \unitlength \HPictH \PictH \divide \HPictH 2%
\def \ContentMed{\begin{picture}(\PictW,\PictH)(-\HPictW,-\HPictH)
 \put(75,0){%
	\makebox(0,0){%
	\VOffset \HPictH \advance \VOffset -3%
	\HOffset \MedW 
		\advance \HOffset \wd144
		\advance \HOffset -\wd145 
	\divide \HOffset \unitlength \divide \HOffset 2
 \VOffset \HBoxH \advance \VOffset 0
 \Gene{\begin{picture}(\BoxW,\BoxH)(-\HOffset,-\VOffset)
	\put(-220,0) {\makebox(0,0)[r]{\box144}}
	\put(-180,-20) {\Yellow \vector(1,0){120}}
	\put(-130,30)	{\makebox(0,0){\termcolour \small $\widehat{#2}$}}
	\put(0,0)	{\makebox(0,0){${\Yellow [~]}$}}
	\put(120,30)	{\makebox(0,0){\termcolour \small $\widehat{#4}$}}
	\put(60,-20)	{\Yellow \vector(1,0){120}}
	\put(220,0)	{\makebox(0,0)[l]{\box145}}
	\end{picture}}}}
 \HOffset\HBoxW\advance\HOffset120
 \put(-\HOffset,-15) {\Yellow \vector(1,0){140}}
 \advance\HOffset-70
 \put(-\HOffset,35) {\makebox(0,0){\termcolour $#3$}}
 \end{picture}}
\ContentMed 
}
\def \Diagmed #1 #2 [#3] #4 #5 {\DiagMed{#1}{#2}{#3}{#4}{#5}}
\newcommand{\DiagDMed}[5]{\unitlength.1\point%
\setbox144=\hbox{{\termcolour $#1$}}\setbox145=\hbox{{\termcolour $#5$}}\setbox146=\hbox{\termcolour $#3$}%
\ifdim \wd144 > \wd145 \MedW \wd144 \else \MedW \wd145 \fi 
\advance \MedW 300\unitlength
\MedH \ht144 \advance \MedH \ht145 \advance \MedH 100\unitlength
\BoxW\MedW\divide\BoxW\unitlength\HBoxW\BoxW\divide\HBoxW2%
\BoxH\MedH\divide\BoxH\unitlength\HBoxH\BoxH\divide\HBoxH2%
\GeneWidth \MedW \advance \GeneWidth 240\unitlength
\GeneHeight \MedH \advance \GeneHeight 70\unitlength
\PictW \GeneWidth \divide \PictW \unitlength \HPictW \PictW \divide \HPictW 2%
\PictH \GeneHeight \divide \PictH \unitlength \HPictH \PictH \divide \HPictH 2%
\def \ContentDMed{\begin{picture}(\PictW,\PictH)(-\HPictW,-\HPictH)%
 \put(80,0){%
	\makebox(0,0){%
	\VOffset \HPictH \divide \VOffset 2%
	\ifdim \wd144 > \wd145 \HOffset \wd144
	\else \HOffset \wd145 \fi
	\divide \HOffset \unitlength \advance \HOffset 50%
 \Gene{\begin{picture}(\BoxW,\BoxH)(-\HOffset,-\HBoxH)
	\put(0,\VOffset){%
	\put(0,-20) {\makebox(0,0)[r]{\box144}}
	\put(40,-30) {\Yellow \vector(1,0){120}}
	\put(95,18)	{\makebox(0,0){\termcolour $\widehat{#2}$}}
	}
	\put(-\HOffset,-\VOffset){%
	\put(80,0)	{\makebox(0,0){${\Yellow [~]}$}}
	\put(175,18)	{\makebox(0,0){\termcolour $\widehat{#4}$}}
	\put(125,-30)	{\Yellow \vector(1,0){120}}
	\put(250,0)	{\makebox(0,0)[l]{\box145}}
	}
	\end{picture}}}}%
 \HOffset\HBoxW\advance\HOffset120%
 \put(-\HOffset,-20) {\Yellow \vector(1,0){140}}%
 \advance\HOffset-70%
 \put(-\HOffset,30) {\makebox(0,0){\termcolour $#3$}}%
 \end{picture}}%
 \ContentDMed 
}
\newcommand{\DiagCutA}[5]{\unitlength.1\point%
\setbox141=\hbox{\small $#1$}%
\setbox142=\hbox{\small $#4$}%
\GeneWidth \wd141 \advance \GeneWidth \wd142 \advance \GeneWidth 400\unitlength
\ifdim \ht141 > \ht142 \GeneHeight \ht141 \else \GeneHeight \ht142 \fi \advance \GeneHeight 30\unitlength
\PictW \GeneWidth \divide \PictW \unitlength \HPictW \PictW \divide \HPictW 2%
\PictH \GeneHeight \divide \PictH \unitlength \HPictH \PictH \divide \HPictH 2 \advance \PictH 20%
\HOffset \GeneWidth
\ifdim \wd142 > \wd141 \advance \HOffset -\wd142 \advance \HOffset \wd141
\else \advance \HOffset \wd141 \advance \HOffset -\wd142 \fi
\divide \HOffset \unitlength \divide \HOffset 2
\VOffset \HPictH \advance \VOffset 0
\def \ContentCut{ \begin{picture}(\PictW,\PictH)(-\HOffset,-\VOffset)
	\put(-178,0)	{\makebox(0,0)[r]{\termcolour \box141}}
	\put(-140,-20)	{\Yellow \vector(1,0){160}}
	\put(-70,28)	{\makebox(0,0){\termcolour $\widehat{#2}$}}
	\put(0,-100)	{\makebox(0,0){\mbox{\Yellow \footnotesize \sc #5}}}
	\put(70,28)	{\makebox(0,0){\termcolour $\widehat{#3}$}}
	\put(0,-20)	{\Yellow \line(1,0){140}}
	\put(180,0)	{\makebox(0,0)[l]{\termcolour \box142}}
 \end{picture}}
\Gene{ \ContentCut }
}
\newcommand{\XNat}[1]{\tempcount#1\raise-4\point\hbox{\DiagCap{x}{`a}}\loop{\ifnum\tempcount>0%
\,[f]\,\raise-4\point\hbox{\DiagCap{x}{`a}}%
\advance\tempcount by-1 }\repeat}
\def \typeofargs#1#2{{\It {typeof}}\ #1\ #2}
\def \typeof{\@ifnextchar\bgroup{\typeofargs}{\It {typeof}}}
\def \arrow {{\rightarrow}}
\def \ftsc#1{\textsc{\scriptsize #1}}
\def \tinysc#1{\textsc{\tiny #1}}
\def \ftrm#1{\textrm{\scriptsize #1}}
\def \unifyContextsargs#1#2{{\It {unifyContexts}}\ #1\ #2}
\def \unifyContexts{\@ifnextchar\bgroup{\unifyContextsargs}{\It {unifyContexts}}}
\def \GCut {\hbox{\sf \small Cut}}
\def \LC	{\mbox{$\Red `l$-calculus}}
\def \DBs {\textsf{\small DB}}
\def \DBf#1#2#3{\DBs(#1,#2,#3)}
\def \DB {\@ifnextchar{\bgroup}{\DBf}{\DBs}}
\def \expT#1#2#3#4{{\typecolour \XOut{\XIn{#1}{#2}}{#3}\cdot #4}}
\def \medT#1#2#3#4#5{{\typecolour \XOut{#1}{#2}\,[#3]\,\XIn{#4}{#5}}}
\def \expT#1#2#3#4{\XOut{\XIn{#1}{#2}}{#3}\kern.2mm{`.}\kern.2mm#4}
\def \medT#1#2#3#4#5{\setbox111=\hbox{$#1$}\setbox112=\hbox{$#5$}%
\ifdim\wd111<10pt%
	\ifdim\wd112<10pt%
	\XOut{\box111}{#2}\,[#3]\,\XIn{#4}{\box112}%
	\else
	\XOut{\box111}{#2}~[#3]~\XIn{#4}{\box112}
	\fi
\else
	\XOut{\box111}{#2}~[#3]~\XIn{#4}{\box112}
\fi}
\def \cutV#1#2#3#4{\XOut{#1}{#2} \DaggerV \XIn{#3}{#4}}
\def \cutN#1#2#3#4{\XOut{#1}{#2} \DaggerN \XIn{#3}{#4}}
\def \cutleft#1#2#3#4{\XOut{#1}{#2} \DaggerL \XIn{#3}{#4}}
\def \cutright#1#2#3#4{\XOut{#1}{#2} \DaggerR \XIn{#3}{#4}}
\def \cutA#1#2#3#4{\XOut{#1}{#2}\DaggerA \XIn{#3}{#4}}
\def \negLT#1#2#3{{#1}\kern.2mm{`.}\kern.2mm\XOut{#2}{#3}}
\def \negRT#1#2#3{\XIn{#1}{#2}\kern.2mm{`.}\kern.2mm{#3}}
\def \orLT#1#2#3#4#5{{#1}\kern.2mm{`.}\kern.2mm\Group<\widehat{#2}{#3}\kern.3mm{\mid}\kern.3mm\widehat{#4}{#5}>}
\def \orRLT#1#2#3#4{{#1}\Group<\widehat{#2}\kern.2mm{\mid}\kern.2mm{#3}>\kern.2mm{`.}\kern.2mm{#4}}
\def \orRRT#1#2#3#4{{#1}\Group<{#2}\kern.2mm{\mid}\kern.2mm\widehat{#3}>\kern.2mm{`.}\kern.2mm{#4}}
\def \orRPT#1#2#3#4{{#1}\Group<\widehat{#2}\kern.2mm{\mid}\kern.2mm\widehat{#3}>\kern.2mm{`.}\kern.2mm{#4}}
\def \Caps(#1,#2){{\termcolour \langle#1{`.}#2\rangle}}
\def \CapT#1#2{\Caps(#1,#2)}
\def \ExpT#1#2#3#4{{\termcolour \if@om \omfalse\expT{#1}{#2}{#3}{#4}\omtrue
	\else (\expT{#1}{#2}{#3}{#4}) \fi}}
\def \Exps#1#2#3.#4{{\termcolour \if@om \omfalse\expT{#1}{#2}{#3}{#4}\omtrue
	\else (\expT{#1}{#2}{#3}{#4})\fi}}
\def \MedT#1#2#3#4#5{{\termcolour \if@om \omfalse\medT{#1}{#2}{#3}{#4}{#5}\omtrue
	\else (\medT{#1}{#2}{#3}{#4}{#5})\fi}}
\def \Meddl#1#2#3#4#5{\if@om \omfalse\medTdl{#1}{#2}{#3}{#4}{#5}\omtrue
	\else (\medTdl{#1}{#2}{#3}{#4}{#5})\fi}
\def \Meds#1#2[#3]#4#5{{\termcolour \if@om \omfalse\medT{#1}{#2}{#3}{#4}{#5}\omtrue \else (\medT{#1}{#2}{#3}{#4}{#5})\fi}}
\def \CutT#1#2#3#4{{\termcolour \if@om \omfalse\cutT{#1}{#2}{#3}{#4}\omtrue
	\else	(\cutT{#1}{#2}{#3}{#4})\fi}}
\def \ColCutT#1#2#3#4{{\termcolour \if@om \omfalse\ColcutT{#1}{#2}{#3}{#4}\omtrue
	\else	(\ColcutT{#1}{#2}{#3}{#4})\fi}}
\def \CutLog#1#2#3#4{{\termcolour \if@om \omfalse\cutlog{#1}{#2}{#3}{#4}\omtrue
	\else	(\cutlog{#1}{#2}{#3}{#4})\fi}}
\def \CutLeft#1#2#3#4{{\termcolour \if@om \omfalse\cutleft{#1}{#2}{#3}{#4}\omtrue \else	(\cutleft{#1}{#2}{#3}{#4})\fi}}
\def \CutRight#1#2#3#4{{\termcolour \if@om \omfalse\cutright{#1}{#2}{#3}{#4}\omtrue \else	(\cutright{#1}{#2}{#3}{#4})\fi}}
\def \CutV#1#2#3#4{{\termcolour \if@om \omfalse\cutV{#1}{#2}{#3}{#4}\omtrue
	\else	(\cutV{#1}{#2}{#3}{#4} )\fi}}
\def \CutN#1#2#3#4{{\termcolour \if@om \omfalse\cutN{#1}{#2}{#3}{#4}\omtrue
	\else	(\cutN{#1}{#2}{#3}{#4} )\fi}}
\def \CutA#1#2#3#4{{\termcolour \if@om \omfalse\cutA{#1}{#2}{#3}{#4}\omtrue
	\else	(\cutA{#1}{#2}{#3}{#4})\fi}}
\def \NegL#1#2#3{{\termcolour \if@om \omfalse\negLT{#1}{#2}{#3}\omtrue
	\else	(\negLT{#1}{#2}{#3})\fi}}
\def \NegR#1#2#3{{\termcolour \if@om \omfalse\negRT{#1}{#2}{#3}\omtrue
	\else	(\negRT{#1}{#2}{#3})\fi}}
\def \AndL#1#2#3#4{{\termcolour \if@om \omfalse \andLPT{#1}{#2}{#3}{#4} \omtrue \else \skp (\andLPT{#1}{#2}{#3}{#4}) \skp \fi}}
\def \AndLL#1#2#3#4{{\termcolour \if@om \omfalse \andLLT{#1}{#2}{#3}{#4} \omtrue \else \skp (\andLLT{#1}{#2}{#3}{#4}) \skp\fi}}
\def \AndLR#1#2#3#4{{\termcolour \if@om \omfalse\andLRT{#1}{#2}{#3}{#4}\omtrue	\else	(\andLRT{#1}{#2}{#3}{#4})\fi}}
\def \AndLP#1#2#3#4{{\termcolour \if@om \omfalse\andLPT{#1}{#2}{#3}{#4}\omtrue	\else	(\andLPT{#1}{#2}{#3}{#4})\fi}}
\def \AndR#1#2#3#4#5{{\termcolour \if@om \omfalse\andRT{#1}{#2}{#3}{#4}{#5}\omtrue	\else	(\andRT{#1}{#2}{#3}{#4}{#5})\fi}}
\def \OrL#1#2#3#4#5{{\termcolour \if@om \omfalse\orLT{#1}{#2}{#3}{#4}{#5}\omtrue	\else	(\orLT{#1}{#2}{#3}{#4}{#5})\fi}}
\def \OrRL#1#2#3#4{{\termcolour \if@om \omfalse\orRLT{#1}{#2}{#3}{#4}\omtrue	\else	(\orRLT{#1}{#2}{#3}{#4})\fi}}
\def \OrRR#1#2#3#4{{\termcolour \if@om \omfalse\orRRT{#1}{#2}{#3}{#4}\omtrue	\else	(\orRRT{#1}{#2}{#3}{#4})\fi}}
\def \OrRP#1#2#3#4{{\termcolour \if@om \omfalse\orRPT{#1}{#2}{#3}{#4}\omtrue	\else	(\orRPT{#1}{#2}{#3}{#4})\fi}}
\def \NewConR#1#2#3#4{{\termcolour \if@om \omfalse\newconRT{#1}{#2}{#3}{#4}\omtrue	\else	(\newconRT{#1}{#2}{#3}{#4})\fi}}
\def \NewConL#1#2#3#4#5{{\termcolour \if@om \omfalse\newconLT{#1}{#2}{#3}{#4}{#5}\omtrue	\else	(\newconLT{#1}{#2}{#3}{#4}{#5})\fi}}
\def \caps<#1,#2>{\Cap{#1}{#2}}
\def \med #1 #2 [#3] #4 #5 {\Med{#1}{#2}{#3}{#4}{#5}}
\def \imp #1 #2 [#3] #4 #5 {\Med{#1}{#2}{#3}{#4}{#5}}
\def \impdl #1 #2 [#3] #4 #5 {\Meddl{#1}{#2}{#3}{#4}{#5}}
\def \exp #1 #2 #3 . #4 {\Exp{#1}{#2}{#3}{#4}}
\def \expl #1 #2 #3 #4 . #5 {{#1}:\Exp{#2}{#3}{#4}{#5}}
\def \cut #1 #2 + #3 #4 {\Cut{#1}{#2}{#3}{#4}}
\def \cutdl #1 #2 + #3 #4 {\Cutdl{#1}{#2}{#3}{#4}}
\def \cutLog #1 #2 + #3 #4 {\CutLog{#1}{#2}{#3}{#4}}
\def \cutL #1 #2 + #3 #4 {\CutL{#1}{#2}{#3}{#4}}
\def \cutR #1 #2 + #3 #4 {\CutR{#1}{#2}{#3}{#4}}
\def \colcut #1 #2 + #3 #4 {\ColCut{#1}{#2}{#3}{#4}}
\def \colcutR #1 #2 + #3 #4 {\ColCut{#1}{#2}{#3}{#4}}
\def \colcutL #1 #2 + #3 #4 {\ColCut{#1}{#2}{#3}{#4}}
\def \negL #1 . #2 #3 {\NegL{#1}{#2}{#3}}
\def \negR #1 #2 . #3 {\NegR{#1}{#2}{#3}}
\def \andLL #1 . < #2 , #3 > #4 {\AndLL{#1}{#2}{\nonconnectable}{#4}}
\def \andLP #1 . < #2 , #3 > #4 {\AndLP{#1}{#2}{#3}{#4}}
\def \andLR #1 . < #2 , #3 > #4 {\AndLR{#1}{\nonconnectable}{#3}{#4}}
\def \andLLaux #1 . < #2 , {\@ifnextchar{o}{\andLL #1 . < #2 , }{\andLP #1 . < #2 , }}
\def \andL #1 . < {\@ifnextchar{o}{\andLR {#1} . < }{\andLLaux {#1} . < }}
\def \andLL#1#2#3#4{{#1}\kern.2mm{`.}\kern.2mm\Group<\widehat{#2},{#3}>{#4}}
\def \andLR#1#2#3#4{{#1}\kern.2mm{`.}\kern.2mm\Group<{#2},\widehat{#3}>{#4}}
\def \andLLT#1#2#3#4{{#1}\kern.2mm{`.}\kern.2mm\Group<\widehat{#2},{#3}>{#4}}
\def \andLRT#1#2#3#4{{#1}\kern.2mm{`.}\kern.2mm\Group<{#2},\widehat{#3}>{#4}}
\def \andLPT#1#2#3#4{{#1}\kern.2mm{`.}\kern.2mm\Group<\widehat{#2},\widehat{#3}>{#4}}
\def \andRT#1#2#3#4#5{\Group<{#1}\widehat{#2},{#3}\widehat{#4}>\kern.2mm{`.}\kern.2mm{#5}}
\def \andRrule{\Conj\textit{R}}
\def \andRlog < #1 #2 , #3 #4 > . #5 {\AndR{#1}{#2}{#3}{#4}{#5}}
\def \andR{\@ifnextchar<{\andRlog}{\andRrule}}
\def \invL #1 . #2 #3 {\InvLterm{#1}{#2}{#3}}
\def \invR #1 #2 . #3 {\InvRterm{#1}{#2}{#3}}
\def \orR #1 < {\@ifnextchar{o}{\orRR {#1} < }{\orRLaux {#1} < }}
\def \orRLaux #1 < #2 | {\@ifnextchar{o}{\orRL #1 < #2 | }{\orRP #1 < #2 | }}
\def \orRL #1 < #2 | #3 > . #4 {\OrRL{#1}{#2}{\nonconnectable}{#4}}
\def \orRR #1 < #2 | #3 > . #4 {\OrRR{#1}{\nonconnectable}{#3}{#4}}
\def \orRP #1 < #2 | #3 > . #4 {\OrRP{#1}{#2}{#3}{#4}}
\def \projLT#1#2#3#4{{#1}\kern.2mm{`.}\kern.2mm\Group<\widehat{#2},{#3}>{#4}}
\def \projLR #1 . < #2 , #3 > #4 {\AndL{#1}{#2}{#3}{#4} }
\def \projL #1 . < #2 , #3 > #4 {\AndLL{#1}{#2}{\nonconnectable}{#4}}
\def \projLaux #1 . < #2 , {\@ifnextchar{o}{\projL {#1} . < {#2} , }{\projLR {#1} . < {#2} , }}
\def \projR #1 . < #2 , #3 > #4 {\AndLR{#1}{\nonconnectable}{#3}{#4}}
\def \proj #1 . < {\@ifnextchar{o}{\projR {#1} . < }{\projLaux {#1} . < }}
\def \pair < #1 #2 , #3 #4 > . #5 {\AndR{#1}{#2}{#3}{#4}{#5}}
\def \inP #1 < {\@ifnextchar{o}{\inR {#1} < }{\inLaux {#1} < }}
\def \inLaux #1 < #2 | {\@ifnextchar{o}{\inL #1 < #2 | }{\inRP #1 < #2 | }}
\def \inL #1 < #2 | #3 > . #4 {\OrRL{#1}{#2}{\nonconnectable}{#4}}
\def \inR #1 < #2 | #3 > . #4 {\OrRR{#1}{\nonconnectable}{#3}{#4}}
\def \inRP #1 < #2 | #3 > . #4 {\OrRP{#1}{#2}{#3}{#4}}
\def \choice #1 . < #2 #3 | #4 #5 > {\OrL{#1}{#2}{#3}{#4}{#5}}
\def \QED{%
 \unskip
 \hfill 
 \begingroup
 \unitlength\point
 \linethickness{.5\point}%
 \framebox(7,7){}%
 \endgroup
 \kern 10\point
}
\def \same	{\mathrel{\equiv}}
\def \Natural{
{\sf I\kern-1\point N}}
\def \mybullet{\vspace*{2mm} ~ \kern-5mm $\bullet$}
\def \import{\textsl{import}}
\def \ellips(#1,#2){\put (0,0)	{\psellipse[unit=\unitlength, linewidth=1.25pt](#1,#2)}}
 \title {Sound and Complete Typing for $\lmu$}
 \date{}
\begin{document}
 \bibliographystyle{eptcs}

\maketitle

 \begin{abstract}
In this paper we define intersection and union type assignment for Parigot's calculus $\lmu$.
We show that this notion is complete (\emph{i.e.}~closed under subject-expansion), and show also that it is sound (\emph{i.e.}~closed under subject-reduction).
This implies that this notion of intersection-union type assignment is suitable to define a semantics.
 \end{abstract}


 \section*{Introduction}
The Intersection Type Discipline has proven to be an expressive tool for studying termination and semantics for the $`l$-calculus~\cite{Church'36,Barendregt'84}.
Intersection type assignment is defined as an extension of the standard, implicative type assignment known as Curry's system \cite{Curry-Feys'58} (see also \cite{Hindley'97}), which expresses function composition and application; the extension made consists of relaxing the requirement that a parameter for a function should have a single type, adding the type constructor $\inter$ next to $\arrow$.
This simple extension allows for a great leap in complexity: not only can a (filter) model be built for the {\LC} using intersection types, also strong normalisation (termination) can be characterised via assignable types; however, type assignment becomes undecidable.
The literature on intersection types is vast; it was first defined by Coppo and Dezani-Ciancaglini in \cite{Coppo-Dezani'78} and its development took place over a number of years, culminating in 
the paper by Barendregt, Coppo, and Dezani-Ciancaglini \cite{BCD'83}, and has been explored by many people since.

It is natural to ask if these results can be achieved for other calculi (reduction systems) as well, and in previous papers the author investigated (in collaboration) Term Rewriting Systems \cite{Bakel-Fernandez-IaC'97}, and Object Oriented Calculi \cite{Bakel-deLiguoro-TOCS'08}; Maffeis looked at intersection types in the context of the $`p$-calculus \cite{Maffeis'05}.
In order to come to a characterisation of strong normalisation for Curien and Herbelin's (untyped) sequent calculus $\lmmt$ \cite{Curien-Herbelin'00}, Dougherty, Ghilezan and Lescanne presented System $\MIU$ \cite{DGL-ITRS'04}, that defines a notion of intersection and union typing for that calculus; in a later paper \cite{DGL-CDR'08}, they presented an improved version of their original system.

In \cite{Bakel-APAL'10}, the author revisited System $\MIU$, and showed that that system was neither \emph{sound} (\emph{i.e.}~closed under reduction), nor \emph{complete} (\emph{i.e.}~closed under reverse reduction); the same holds for the system presented in \cite{DGL-CDR'08}.
To address completeness, \cite{Bakel-APAL'10} adds $\Top$ as the maximal and $\Bottom$ as the minimal type, and extends the set of derivation rules; however, soundness is shown to be impossible to achieve without restricting typeability (effectively making less terms typeable).
In \cite{Bakel-FI'10}, the author attempted to solve the same issue, but this time in the context of the sequent calculus $\X$, as defined by Lengrand \cite{Lengrand'03}, and later studied by Lescanne and the author \cite{vBLL-ICTCS'05,Bakel-Lescanne-MSCS'08}; $\X$ is a sequent calculus in that it enjoys the Curry-Howard isomorphism with respect to the implicative fragment of Gentzen's $\LK$ \cite{Gentzen'35}.
The advantage of using the sequent approach is that it is now possible to explore the duality of intersection and union fully, through which we can study and explain various anomalies of union type assignment \cite{Pierce-PhD'91,Barbanera-Dezani-Liguoro-IaC'95} and quantification \cite{Harper-Lillibridge'91,Milner-et.al'97}.
Also for $\X$, the completeness result follows relatively easily, but soundness can only be shown for restricted systems (effectively call-by-name and call-by-value reduction, but it might be possible that other sound restrictions exist as well).
The main conclusion of those papers is that, in symmetric calculi (like $\lmmt$ and $\X$) it is inevitable that intersection and union are truly dual, and that the very nature of those calculi makes a sound and complete system unachievable.

In this paper we will continue on this path and bring intersection types to the context of \emph{classical logic}, by presenting a notion of intersection and union type assignment for the (untyped) calculus $\lmu$, that was first defined by Parigot in \cite{Parigot'93}, and was later extensively studied by various authors.

Intersection and union types have also been studied in the context of the {\LC} in \cite{Barbanera-Dezani-Liguoro-IaC'95}; also for the system defined in that paper soundness is lost, which can only be recovered by limiting to parallel reduction, \emph{i.e.}~all residuals of a redex need to be contracted in parallel.
The problem of loss of soundness also appears in other contexts, such as that of {\ML} with side-effects \cite{Harper-Lillibridge'91,Wright'95,Milner-et.al'97}, and that of using intersection and union types in an operational setting \cite{Davies-Pfenning'01,Dunfield-Pfenning'03}.
As here, also there the cause of the problem is that the type-assignment rules are not fully logical, making the context calls (which form part of the reduction in $\X$) unsafe; this has, in part, already been observed in \cite{Herbelin'05} in the context of Curien and Herbelin's calculus $\lmmt$ \cite{Curien-Herbelin'00}.
This also explains why, for {\ML} with side-effects, quantification is no longer sound \cite{Harper-Lillibridge'91,Milner-et.al'97}: also the $(\forall I)$ and $(\forall E)$ rules of {\ML} are not logical.

In the view of those failures, the result presented here comes as a surprise.
We will define a notion of type assignment for $\lmu$ that uses intersection and union types, and show that it is \emph{both} sound \emph{and} complete.
The system presented is a natural extension of the strict intersection type assignment system as defined in \cite{Bakel-TCS'92}; this implies that intersection models the distribution of arguments in a parameter call.
But it is also a natural extension of the system for $\lmu$, and in order to achieve completeness for structural reduction, as in the papers mentioned above, union types are added.
However, the union types are no longer dual to intersection types; union types play only a marginal role, as was also the intention of \cite{DGL-CDR'08}.
Contrary to that paper, however, we do not see union as negated intersection, but see a union type as a strict type; in particular, we do not allow the normal $(\unI)$ and $(\unE)$ rules as used in \cite{Barbanera-Dezani-Liguoro-IaC'95}, which we know create the same soundness problem.
Moreover, although one can link intersection types with the logical connector \textsl{and}, the union types we use here have \emph{no} relation with \textsl{or}; one could argue that therefore perhaps \emph{union} is not the right name to use for this type constructor, but we will stick with it nonetheless.

The limited view of union types is mirrored by $\lmu$'s limited (with respect to $\lmmt$ and $\X$) notion of context\footnote{In particular, $\lmmt$'s $\mt x.c$ is not represented.}.
In $\lmu$, we distinguish \emph{control structures} as those terms that start with a context switch $`m`a.[`b]M$, followed by a number of arguments; since union types allow us to express that the various continuations (all called $`a$) need not have the same type, we use a different formulation for rule $(\arrE)$, which has an implicit use of union elimination (see Definition~\ref{TurnLmuIU system}). 
The type system defined here will be shown to be the natural one, in that intersection and union play their expected roles for completeness.
Because the use of intersection and union is limited in that a context variable cannot have an intersection type, and although we allow union types for term variables, we do not have the normal union elimination rule; thanks to these two restrictions, we can show soundness as well.

 \section{The calculus $`l`m$} \label{lmu}

Parigot's $`l`m$-calculus~\cite{Parigot'92} is a proof-term syntax for classical logic, expressed in Natural Deduction, defined as an extension of the Curry type assignment system for the {\LC}.
We quickly revise some basic notions:

 \begin{definition} [Lambda terms and $\beta$-contraction \cite{Barendregt'84}]
 \label {lambda terms}

 \begin{enumerate}

 \firstitem \label {L-terms}
\emph{$`l$-terms} are defined by:
 \[\begin{array}{rcl}
M,N &::=& x \mid `l x . M \mid MN
 \end{array} \]

 \item \label {red}
The reduction relation $\bred$ is defined as the contextual closure of the rule:
 \[\begin{array}{rcl}
( `l x . M ) N &\bred& M [ N /x ]
 \end{array} \]

 \end{enumerate}
 \end{definition}

Curry (or simple) type assignment for the $`l$-calculus is defined as:

 \begin{definition} \label {Curry types} \label {Curry type assignment}
 \begin{enumerate}

 \firstitem
Let $`v$ range over a countable (infinite) set of type-variables.
The set of \emph{Curry-types} is defined by the grammar:
 \[ \begin{array}{rcl}
A,B &::=& `v \mid (A\arr B)
 \end{array} \]

 \item
\emph{Curry-type assignment} is defined by the following natural deduction system.

 \[ \begin{array}{rl@{\dquad}rl@{\dquad}rl}
(\Ax): &
\Inf	{\der `G,x{:}A |- x : A } 
&
(\arrI): &
\Inf	{\der `G,x{:}A |- M : B }
	{ \der `G |- `lx.M : A\arr B }
&
(\arrE): &
\Inf	{ \der `G |- M : A\arr B \quad \der `G |- N : A }
	{ \der `G |- MN : B }
 \end{array} \]

 \end{enumerate}
 \end{definition}

With $`l`m$ Parigot created a multi-conclusion typing system which corresponds to classical logic; the derivable statements have the shape $\derLmu `G |- M : A | `D $, where $A$ is the main conclusion of the statement, expressed as the \emph{active} conclusion, and $`D$ contains the alternative conclusions, consisting of pairs of Greek characters and types; the left-hand context $`G$, as usual, contains pairs of Roman characters and types, and represents the types of the free term variables of $M$.
As with Implicative Intuitionistic Logic, the reduction rules for the terms that represent the proofs correspond to proof contractions; the difference is that the reduction rules for the {\LC} are the \emph{logical} reductions, \emph{i.e.}~deal with the elimination of a type construct that has been introduced directly above.
In addition to these, Parigot expresses also the \emph{structural} rules, where elimination takes place for a type constructor that appears in one of the alternative conclusions (the Greek variable is the name given to a subterm): he therefore needs to express that the focus of the derivation (proof) changes, and this is achieved by extending the syntax with two new constructs $[`a]M$ and $`m`a.M$ that act as witness to \emph{deactivation} and \emph{activation}, which together move the focus of the derivation.

\Comment{\section*{Variants of $\lmu$}}

We will now present the variant of $`l`m$ we consider in this paper, as considered by Parigot in \cite{Parigot-Brno'93}; for convenience, we split terms into two categories: we define \emph{terms}, and \emph{control structure}:

 \begin{definition}[Syntax of $\lmu$] \label{lm-terms}
The $`l`m$-\emph{terms} we consider are:
 \[ \begin{array}{r@{~}c@{~}l}
M,N &::=& x \mid `l x.M \mid MN \mid `m`a.[`b]M\,.
 \end{array} \]
We also define \emph{control structure} as a subset of terms:
 $ \begin{array}{r@{~}c@{~}l}
C &::=& `m`a.[`b]M \mid CM \, .
 \end{array} $
 \end{definition}
To shorten proofs and notation, we will treat $`m`a.M$ as a term as well, whenever convenient.

As usual, $`l x.M$ binds $x$ in $M$, and $`m `a .[`b]M$ binds $`a $ in $M$, and the notions of free and bound variables are defined accordingly; the notion of $`a $-conversion extends naturally to bound names, and we assume Barendregt's convention on free and bound variables.

In $\lmu$, reduction of terms is expressed via implicit substitution; as usual, $M[N/x]$ stands for the substitution of all occurrences of $x$ in $M$ by $N$, and $M [ N{`.}`g/`a ]$ stands for the term obtained from $M$ in which every (pseudo) sub-term of the form $[`a]M'$ is substituted by $[`g](M' N)$ ($`g $ is a fresh variable) (in Parigot's notation: $ (`m `a . [`b] M ) N \red `m`g . [`b] M [\, [`g]PN / [`a]P \,]$).

We define formally how to preform the $`m$-substitution; this is convenient in later proofs.

 \begin{definition}
We define $M [ N{`.}`g/`a ]$ by induction over the structure of terms by:
 \[ \begin{array}{r@{\,}lcll}
x & [ N{`.}`g/`a ] & \ByDef & x
\\
(`lx.M) & [ N{`.}`g/`a ] & \ByDef & `lx . (M \, [ N{`.}`g/`a ] )
\\
(M_1M_2) & [ N{`.}`g/`a ] & \ByDef & M_1 \, [ N{`.}`g/`a ]~M_2 \, [ N{`.}`g/`a ]
 \\
(`m`d.[`a]M) & [ N{`.}`g/`a ] & \ByDef & `m`d.[`g](M\,[ N{`.}`g/`a ] N)
\\
(`m`d.[`b]M) & [ N{`.}`g/`a ] & \ByDef & `m`d.[`b] ( M \, [ N{`.}`g/`a ] ) & `b \not= `a 
 \end{array} \]
 \end{definition}

We have the following rules of computation in $\lmu$:

 \begin{definition}[$`l`m$ reduction]
Parigot defines a number of reduction rules: two \emph{computational rules}

 \[ \begin{array}{rrcl}
\textit{logical } (\beta): & (`l x . M ) N & \red & M [ N /x ]
\\
\textit{structural } (\mu): & (`m `a . [`b] M ) N & \red & `m`g.( [`b] M[N{`.}`g/`a]) \hspace*{5.5cm}
 \end{array} \]
as well as the \emph{simplification rules}:
 \[ \begin{array}{rrcll}
\textit{renaming} : & `m`a[`b](`m`g.[`d]M) & \red & `m`a.[`d]M[`b/`g]
\\
\textit{erasing} : & `m `a . [`a] M & \red & M & \textit{if $`a$ does not occur in $M$.}
\\
`h`m : & `m`a . [`b] M & \red & `lx`m`g . [`b] M[x{`.}`g/`a]
 \end{array} \]
which are added mainly to simplify the presentation of his results%
\footnote{In fact, Parigot formulates the renaming rule as $ [`b](`m`g.M) \red M[`b/`g] $; since $[`b](`m`g.M)$ is not a term, we write the rule differently.}.

Reduction on $`l`m$-terms is defined as the compatible closure of these rules.
 \end{definition}
It is possible to define more reduction rules, but Parigot refrained from that since he aimed at defining a confluent reduction system. 
 
The intuition behind the structural rule is given by de Groote \cite{deGroote'94}: ``\emph{in a $\lmu$-term $`m`a.M$ of type $A \arr B$, only the subterms named by $`a$ are \emph{really} of type $A \arr B$ (\ldots); hence, when such a $`m$-abstraction is applied to an argument, this argument must be passed over to the sub-terms named by $`a$.}''
In this paper, we will only deal with the logical, structural and renaming rule; this is also the restriction made by de Groote in \cite{deGroote'94}.

Type assignment for $\lmu$ is defined by the following natural deduction system; there is a \emph{main}, or \emph{active}, conclusion, labelled by a term of this calculus, and the alternative conclusions are labelled by the set of Greek variables $`a ,`b, \textit{etc}$.

 \begin{definition}[Typing rules for $\lmu$]
Our types are those of Definition~\ref{Curry types}, extended with the type constant $\bot$ that is essentially added to express negation, \emph{i.e.}:
 \[ \begin{array}{rcll}
A,B &::=& `v \mid \bot \mid (A\arr B) & (A \not= \bot)
 \end{array} \]
The type assignment rules are:
\[ \begin{array}{rl@{\dquad}rl@{\dquad}rl}
 (\Ax) : &
 \Inf	[x{:}A\in`G]{ \derLmu `G |- x : A | `D }
&
 (`m) : &
 \Inf	{ \derLmu `G |- M : B | `a{:}A,`b{:}B,`D 
	}{ \derLmu `G |- `m`a.[`b]M : A | `b{:}B,`D }
\quad
 \Inf	{ \derLmu `G |- M : A | `a{:}A,`D 
	}{ \derLmu `G |- `m`a.[`a]M : A | `D }
\end{array} \]
\[ \begin{array}{rl@{\dquad}rl}
 (\arrI) : &
 \Inf	{ \derLmu `G,x{:}A |- M : B | `D 
	}{ \derLmu `G |- `l x.M : A\arrow B | `D }
&
 (\arrE) : &
 \Inf	{ \derLmu `G |- M : A\arrow B | `D 
	 \quad
	 \derLmu `G |- N : A | `D 
	}{ \derLmu `G |- MN : B | `D }
\end{array} \]
 \end{definition}

We can think of $[`a]M$ as storing the type of $M$ amongst the alternative conclusions by giving it the name $`a $ - the set of Greek variables is called the set of \emph{context} variables (or \emph{names}).

\Comment
{
Note that we have the \emph{Weakening Property}: If $ \DerLmu { `G }{ M}{A}{`D }$ and $`G \subseteq`G'$ and $`D \subseteq`D'$, then $ \DerLmu { `G' }{ M}{A}{`D' }$.

Since $\bot$ is a proper type, we can express negation, and can inhabit $((A\arr \bot) \arr \bot) \arr A$ with $`ly.`m`a.[`b]y(`lx.`m`d.[`a]x)$; notice that this term is not closed, since $`b$ is free.
We can inhabit Pierce's law with a closed term: $ \derLmu { } |- `lx.`m`a.[`a](x(`ly.`m`b.[`a]y)) : ((A\arrow B)\arrow A)\arrow A | {} $; the underlying logic for this system corresponds to \emph{minimal classical logic} \cite{Ariola-Herbelin'03}.

 \[ \begin{array}{c}
\Inf	[\arrI]
	{\Inf	[`m]
		{\Inf	[\arrE]
			{\Inf	[\Ax]
				{\derLmu y{:}\neg\neg A |- y : \neg\neg A | {} }
			 \Inf	[\arrI]
				{\Inf	[`m]
					{\Inf	[\Ax]
						{\derLmu x{:}A |- x : A | `a{:}A,`d{:}{\bot} }
					}{\derLmu x{:}A |- `m`d.[`a]x : {\bot} | `a{:}A }
				}{\derLmu {} |- `lx.`m`d.[`a]x : \neg A | `a{:}A }
			}{\derLmu y{:}\neg\neg A |- [`b]y(`lx.`m`d.[`a]x) : A | `a{:}A,`b{:}A }
		}{\derLmu `m`a.y{:}\neg\neg A |- `m`a.[`b]y(`lx.`m`d.[`a]x) : A | `b{:}A }
	}{\derLmu {} |- `ly.`m`a.[`b]y(`lx.`m`d.[`a]x) : \neg\neg A\arr A | `b{:}A }
 \end{array} \]

 \[ \begin{array}
\Inf	[\arrI]
	{\Inf	[`m]
		{\Inf	[\arrE]
			{\Inf	[\Ax]
				{ \derLmu x{:}(A\arrow B)\arrow A |- x : (A\arrow B)\arrow A | `a{:}A }
			 \Inf	[\arrI]
			 	{\Inf	[`m]
					{\Inf	[\Ax]
						{ \derLmu x{:}(A\arrow B)\arrow A,y{:}A |- y : A | `a{:}A,`b{:}B }
					}{ \derLmu x{:}(A\arrow B)\arrow A,y{:}A |- `m`b.[`a]y : B | `a{:}A }
				}{ \derLmu x{:}(A\arrow B)\arrow A |- `ly.`m`b.[`a]y : A\arrow B | `a{:}A }
			}{ \derLmu x{:}(A\arrow B)\arrow A |- x(`ly.`m`b.[`a]y) : A | `a{:}A }
		}{ \derLmu x{:}(A\arrow B)\arrow A |- `m`a.[`a](x(`ly.`m`b.[`a]y)) : A | {} }
	}{ \derLmu { } |- `lx.`m`a.[`a](x(`ly.`m`b.[`a]y)) : ((A\arrow B)\arrow A)\arrow A | {} }
 \end{array} \]

:
 \[ \begin{array}{c}
\Inf	[\arrI]
	{	\Inf	[`m]
			{\Inf	[\arrE]
				{\Inf	[\Ax]
					{\derLmu y{:}(A\arr\bot)\arr{\bot} |- y : (A\arr\bot)\arr{\bot} | {} }
				 \Inf	[\arrI]
					{	\Inf	[\bot]
							{\Inf	[\Ax]
								{\derLmu x{:}A |- x : A | `a{:}A }
							}{\derLmu x{:}A |- [`a]x : {\bot} | `a{:}A }
						}{\derLmu {} |- `lx.[`a]x : A\arr{\bot} | `a{:}A }
				}{\derLmu y{:}(A\arr\bot)\arr{\bot} |- y(`lx.[`a]x) : {\bot} | `a{:}A }
			}{\derLmu y{:}(A\arr\bot)\arr{\bot} |- `m`b.y(`lx.[`a]x) : A | `a{:}A }
	}{\derLmu {} |- `ly.`m`b.y(`lx.[`a]x) : ((A\arr\bot)\arr\bot)\arr A | {} }
 \end{array} \]
 notice that $`b$ is free in this proof.
Alternatively, as shown by de Groote \cite{deGroote'94}, using a more relaxed system that treats $[`a]M$ and $`m`a.M$ as terms, it is possible to use the term $ `ly.`m`b.y(`lx.[`a]x) $.

 \[ \begin{array}{c}
\Inf	[\arrI]
	{\Inf	[`m]
		{\Inf	[\bot]
			{\Inf	[\arrE]
				{\Inf	[\Ax]
					{\derLmu y{:}\neg\neg A |- y : \neg\neg A | {} }
				 \Inf	[\arrI]
					{\Inf	[`m]
						{\Inf	[\bot]
							{\Inf	[\Ax]
								{\derLmu x{:}A |- x : A | `a{:}A,`d{:}A }
							}{\derLmu x{:}A |- [`a]x : {\bot} | `a{:}A,`d{:}A }
						}{\derLmu x{:}A |- `m`d.[`a]x : A | `a{:}A }
					}{\derLmu {} |- `lx.`m`d.[`a]x : \neg A | `a{:}A }
				}{\derLmu y{:}\neg\neg A |- y(`lx.`m`d.[`a]x) : {\bot} | `a{:}A,`b{:}A }
			}{\derLmu y{:}\neg\neg A |- [`b]y(`lx.`m`d.[`a]x) : A | `a{:}A,`b{:}A }
		}{\derLmu `m`a.y{:}\neg\neg A |- `m`a.[`b]y(`lx.`m`d.[`a]x) : A | `b{:}A }
	}{\derLmu {} |- `ly.`m`a.[`b]y(`lx.`m`d.[`a]x) : \neg\neg A\arr A | `b{:}A }
 \end{array} \]

 \[ \begin{array}{c}
\Inf	[\arrI]
	{\Inf	[`m]
		{\Inf	[\arrE]
			{\Inf	[\Ax]
				{\derLmu y{:}\neg\neg A |- y : \neg\neg A | {} }
			 \Inf	[\arrI]
				{\Inf	[`m]
					{\Inf	[\Ax]
						{\derLmu x{:}A |- x : A | `a{:}A,`d{:}A }
					}{\derLmu x{:}A |- `m`d.[`a]x : A | `a{:}A }
				}{\derLmu {} |- `lx.`m`d.[`a]x : \neg A | `a{:}A }
			}{\derLmu y{:}\neg\neg A |- y(`lx.`m`d.[`a]x) : A | `a{:}A,`b{:}A }
		}{\derLmu `m`a.y{:}\neg\neg A |- `m`a.[`b]y(`lx.`m`d.[`a]x) : A | `b{:}A }
	}{\derLmu {} |- `ly.`m`a.[`b]y(`lx.`m`d.[`a]x) : \neg\neg A\arr A | `b{:}A }
 \end{array} \]
}

As an example illustrating the fact that this system is more powerful than the system for the \LC, here is a proof of Peirce's Law (due to Ong and Steward~\cite{Ong-Stewart'97}):
\[ 
 \Inf	[\arrI]
	{\Inf	[`m]
		{\Inf	[\bot]
			{\Inf	[\arrE]
				{\Inf	[\Ax]
					{ \derLmu x{:}(A\arrow B)\arrow A |- x : (A\arrow B)\arrow A | `a{:}A }
				 \ 
				 \Inf	[\arrI]
				 	{\Inf	[`m]
						{\Inf	[\bot]
							{\Inf	[\Ax]
								{ \derLmu x{:}(A\arrow B)\arrow A,y{:}A |- y : A | `a{:}A,`b{:}B }
							}{ \derLmu x{:}(A\arrow B)\arrow A,y{:}A |- [`a]y : {\bot} | `a{:}A }
						}{ \derLmu x{:}(A\arrow B)\arrow A,y{:}A |- `m`b.[`a]y : B | `a{:}A }
					}{ \derLmu x{:}(A\arrow B)\arrow A |- `ly.`m`b.[`a]y : A\arrow B | `a{:}A }
				}{ \derLmu x{:}(A\arrow B)\arrow A |- x(`ly.`m`b.[`a]y) : A | `a{:}A }
			}{ \derLmu x{:}(A\arrow B)\arrow A |- [`a](x(`ly.`m`b.[`a]y)) : {\bot} | `a{:}A }
		}{ \derLmu x{:}(A\arrow B)\arrow A |- `m`a.[`a](x(`ly.`m`b.[`a]y)) : A | {} }
	}{ \derLmu { } |- `lx.`m`a.[`a](x(`ly.`m`b.[`a]y)) : ((A\arrow B)\arrow A)\arrow A | {} }
\]
Notice that $\bot$ plays no part in this proof.
Indeed, we can define the set of types without $\bot$; the underlying logic of such a system then corresponds to \emph{minimal classical logic} \cite{Ariola-Herbelin'03}.

Since we allow $\bot$ as a proper type, we can even express negation (of course, it is also implicitly present in the right-hand type environment), and can give a derivation for $\neg\neg A \arr A$, so can express \emph{double negation elimination}; so in $\lmu$ we can represent full Classical Logic.

%

 \[ \begin{array}{c}
\Inf	[\arrI]
	{\Inf	[{\Red `m}]
		{\Inf	[\arrE]
			{\Inf	[\Ax]
				{\derLmu {\Stat y:(A\arr \bot)\arr \bot} |- y : {(A\arr \bot)\arr \bot} | \Stat`b:{\bot} }
			 \Inf	[\arrI]
				{\Inf	[{\Red `m}]
					{\Inf	[\Ax]
						{\derLmu \Stat x:A |- x : A | \Stat `a:A,\Stat`d:{\bot} }
					}{\derLmu \Stat x:A |- `m`d.[`a]x : {\bot} | \Stat`a:A }
				}{\derLmu {} |- `lx.`m`d.[`a]x : A\arr {\bot} | \Stat`a:A }
			}{\derLmu {\Stat y:(A\arr \bot)\arr \bot} |- y(`lx.`m`d.[`a]x) : {\bot} | \Stat`a:A,\Stat`b:{\bot} }
		}{\derLmu \Stat y:(A\arr \bot)\arr {\bot} |- `m`a.[`b]y(`lx.`m`d.[`a]x) : A | \Stat`b:{\bot} }
	}{\derLmu {} |- `ly.`m`a.[`b]y(`lx.`m`d.[`a]x) : ((A\arr \bot)\arr \bot)\arr A | \Stat`b:{\bot} }
 \end{array} \]
Notice that this term is not closed, since $\lterm{`b}$ is free, albeit of type $\bot$.

De Groote \cite{deGroote'94} considers a variant of $\lmu$ which separates the naming and $`m$-binding features\footnote{Notice that then Parigot's renaming rule is correct.
We could have presented our results for this more permissive system, but would have had to sacrifice soundness and completeness for the renaming rule.
Notice that we would still have soundness and completeness for the two computational rules, which are arguably the most important.}.
This gives a considerable different system, that allows for $\neg \neg A \arr A $ to be inhabited via (the closed term) $ `ly.`m`a.y(`lx.[`a]x) $.
De Groote's variant of $\lmu$ \cite{deGroote'94} uses the syntax 
 \[ \begin{array}{r@{~}c@{~}l}
\lterm M, \lterm N &::=& \lterm x \mid \lterm {`l x.M} \mid \lterm {MN} \mid \lterm {`m`a.M} \mid \lterm {[`b]M}
 \end{array} \]
and splits rule $({\Red `m})$ into
 \[ \begin{array}{rl@{\dquad}rl}
 ({\Red `m}) : &
 \Inf	{ \derLmu `G |- M : {\bot} | \Stat`a:A,`D 
	}{ \derLmu `G |- `m`a.M : A | `D }
&
 ({\Red \bot}) : &
 \Inf	{ \derLmu `G |- M : A | \Stat`b:A,`D 
	}{ \derLmu `G |- [`b]M : {\bot} | \Stat`b:A,`D }
 \end{array} \]

In this system we can derive
 \[ \begin{array}{c}
\Inf	[\arrI]
	{\Inf	[{\Red `m}]
		{\Inf	[\arrE]
			{\Inf	[\Ax]
				{\derLmu \Stat y:(A\arr\bot)\arr{\bot} |- y : (A\arr\bot)\arr{\bot} | {} }
			 \Inf	[\arrI]
				{\Inf	[{\Red \bot}]
					{\Inf	[\Ax]
						{\derLmu \Stat x:A |- x : A | \Stat`a:A }
					}{\derLmu \Stat x:A |- [`a]x : {\bot} | \Stat`a:A }
				}{\derLmu {} |- `lx.[`a]x : A\arr{\bot} | \Stat`a:A }
			}{\derLmu y{:}(A\arr\bot)\arr{\bot} |- y(`lx.[`a]x) : {\bot} | \Stat`a:A }
		}{\derLmu y{:}(A\arr\bot)\arr{\bot} |- `m`a.y(`lx.[`a]x) : A | {} }
	}{\derLmu {} |- `ly.`m`a.y(`lx.[`a]x) : ((A\arr\bot)\arr\bot)\arr A | {} }
 \end{array} \]

For the moment, we will deal with Parigot's original system only; we aim to extend our results to de Groote's variant in future work.

 \section{The Strict Intersection Type Assignment System for the $`l$-calculus} \label{strict int for LC}

The remainder of this paper will be dedicated a notion of intersection/union typing on $\lmu$.
This will be defined as a natural extension of the Strict Intersection System \cite{Bakel-TCS'92} for the \LC.
Before we come to that, we will briefly summarise the latter.

 \begin {definition} [Strict types] \label {strict types}
 \begin {enumerate}

 \firstitem \label {Tstrict}
Let $`v$ range over an infinite, enumerable set of type variables.
The set $\Tstrict$ of \emph{strict types}, ranged over by $ A, B,$ \textit{etc} is defined through the grammar:
 \[ \begin{array}{rlll}
 A, B & ::= & `v \mid \Top \arr B \mid (A_1 \inter \dots \inter A_n) \arr B & (n \geq 1)
 \end{array} \]

The set $\Types$ of \emph{intersection types} is defined as the union of $\{\Top\}$ and the closure of $\Tstrict$ under intersection; we will use $A,B,$ \textit{etc} for intersection types as well, and mention which set they belong to when necessary.

 \item A \emph{statement} is an expression of the form $\lstat{M}{A}$, with $M \ele `L$ and $ A \ele \Types$.
$M$ is the \emph{subject} and $A$ the \emph{predicate} of $\lstat{M}{A}$.

 \item A \emph{type-environment} $`G$ is a partial mapping from term variables to intersection types, and we write $x{:}A \ele `G$ if $`G\,(x) = A$.

 \end{enumerate}
 \end{definition}
So if we write a type as $A \arrow B$, then $B \ele \TStrict$, and $A \ele \Types$.

In the notation of types, as usual, right-most outer-most parentheses in arrow types will be omitted, and we assume $\inter$ to bind stronger than $\arrow$.
From hereon, we will write $\n$ for the set $\{1,\ldots,n\}$.

We will consider a pre-order on types which takes into account the idem-potence, commutativity and associativity of the intersection type constructor, and defines $\Top$ to be the maximal element.

 \begin {definition} \label {seq definition}
 \begin {enumerate}

 \firstitem 
The relation `$\leq$' is defined as the least pre-order on $\Types$ such that:
 \[ \begin{array}{rcll}
A_1 \inter \dots \inter A_n &\seq& A_i, & \mbox{ for all } \iotn, n \geq 1
\\
 B \seq A_i,\ \mbox{ for all } \iotn & \Then & B \seq A_1 \inter \dots \inter A_n, & n \geq 0
 \end{array} \]

 \item
On $\Types$, the relation `$\equ$' is defined by: 
 \[ \begin{array}{rcl@{\qquad}rcl}
 A \seq B \seq A & \Then & A \equ B 
& 
 A \equ B \And C \equ D & \Iff & A \arr C \equ B \arr D
 \end{array} \]

 \item
The relations `$\seq$' and `$\equ$', are extended to contexts by: 
$`G\seq `G'$ if and only if for every $x{:}A' \ele `G'$ there is an $x{:}A \ele `G$ such that $ A \seq A' $, and: $`G\equ `G'\Iff$ $`G\seq `G'\seq `G$.

 \end{enumerate}
 \end{definition}

$\Types$ will be considered modulo $\equ$; then $\seq$ becomes a partial order.
It is easy to show that both $(A \inter B) \inter C \sim A \inter (B \inter C)$ and $A \inter B \sim B \inter A$, so the type constructor $\inter$ is associative and commutative, and w%
e will write $\AoI{n}$ for $A_1 \inter \dots \inter A_n$, and consider $\Top$ to be the empty intersection: $\Top = \int{0}A_i$.
Moreover, we will assume, unless stated explicitly otherwise, that in $\AoIn$ each $A_i$ is strict.

 \begin {definition} \label {strict type assignment definition}
The \emph{strict type assignment} is defined by the following natural deduction system (where all types mentioned are strict, with the exception of $A$ in rule $(\arrI)$ and $(\arrE)$):
 \[ \begin {array}{rlcrl}
(\intE): &
 \Inf	[n \beq 1, \jotn]
	{ \der `G,x{:}{\AoIn} |- x : A_j }
&&
(\intI):&
 \Inf [ n \beq 0 ]
	{ \der `G |- M : A_i \quad (\forall \iotn)
 	}{ \der `G |- M : {\AoIn} }
\\[5mm]
(\arrI):&
 \Inf	{ \der `G, x{:}A |- M : B }
	{ \der `G |- `l x.M : A\arr B }
&&
(\arrE):&
\Inf	{ \der `G |- M : A\arr B \quad \der `G |- N : A }
	{ \der `G |- MN : B }
 \end {array} \]
We will write $\der `G |- M : A $ for statements that are derived using these rules.

 \end {definition}
Notice that $\der `G |- M : {\Top} $ for all $`G,M$ by rule $(\intI)$.

Properties of this system have been studied in \cite{Bakel-TCS'92}.

\section{Intersection and union type assignment for $`l`m$}
We will now define a notion of type assignment for $\lmu$ that uses intersection and union types.

We see the context variables $`a$ as names for possible continuations that in the philosophy of intersection types need not all be typed with the same type; we therefore allow multiple types for a context variable in the environment $`D$, grouped using a new type constructor, which we call union.

Binding a context variable then generates a \emph{context switch} $`m`a.[`b]M$, which naturally has a union type $\AoUn$; reduction of the term $(`m`a.[`b]M)N$ will bring the operand $N$ to each of the pseudo subterms in $M$ of the shape $[`a]Q$ (`named' with $`a$), where $Q$ has type $A_i$; since $N$ gets placed behind $Q$, this implies that $A_i = C_i \arr B_i$ and that therefore the type for $`a$ should be $\un{n}(C_i\arr B_i)$; this then also implies that $N$ should have all the types $\Cotn$; rule $(\arrE)$ as below expresses exactly that.
The only `functionality' we need for union types therefore is the ability to choose a collection of types for $`a$ amongst those stored in $`D$; this is represented by rule $(\unE)$.

 \begin{definition}[The system $\TurnLmuIU$] 
 \begin{enumerate}
 
 \firstitem
The set of strict types we consider for the intersection-union type assignment is:
 \[ \begin{array}{rlll}
A, B & ::= & `v \mid B_1 \union \dots \union B_m \mid (A_1 \inter \dots \inter A_n) \arr B & (n,m \geq 0)
 \end{array} \]
As above, we call $A_1 \inter \dots \inter A_n$ (with $n \geq 0$) an intersection type, and call $B_1 \union \dots \union B_m$ (with $m \geq 0$) a \emph{union} type; we use $\Top$ for the empty intersection type, and $\Bottom$ for the empty union type.

 \item The relation $\seq$ of Definition~\ref {seq definition} is extended to intersection-union types by:
 \[ \begin{array}{rcll}
A_1 \inter \dots \inter A_n &\seq& A_i, & \mbox{for all } \iotn, n \geq 1
\\
 B \seq A_i, \mbox{ for all } \iotn & \Then & B \seq A_1 \inter \dots \inter A_n, & n \geq 0 \\
 B_j &\seq& B_1 \union \dots \union B_m, & \mbox{for all } \jotm , m \geq 1 \\
 B_j \seq A, \mbox{ for all } \jotm & \Then & B_1 \union \dots \union B_m \seq A, & m \geq 0 \\
 \end{array} \]

On $\Types$, the relation `$\equ$' is defined by the same way as in Definition~\ref {seq definition}.

 \item
A \emph{left environment} $`G$ is a partial mapping from term variables to intersections of strict types, and we write $x{:}A \ele `G$ if $`G\,(x) = A$.
Similarly, a \emph{right environment} $`D$ contains only strict types, which can be union types.

 \item
The relations `$\seq$' and `$\equ$', are extended to left and right environments by: 
$`G\seq `G'$ if and only if for every $x{:}A' \ele `G'$ there is an $x{:}A \ele `G$ such that $ A \seq A' $, and $`G\equ `G'\Iff$ $`G\seq `G'\seq `G$, and $`D \seq `D'$ if for every $`a{:}A \ele `D$ there exists $`a{:}A' \ele `D'$ such that $A \seq A'$, and $`D \equ `D' \Iff `D \seq `D' \seq `D$.

 \end{enumerate}
 \end{definition}
Notice that we consider union types to be strict as well; this implies that we allow an intersection of union types, a union of union types, but not a union of intersection types.

 \begin{definition}[The system $\TurnLmuIU$] \label{TurnLmuIU system}
 
Intersection-union type assignment for $`l`m$ is defined via: 

\noindent
{ \def\TurnLmuIU{\Turn}
 \[ \kern-5mm
 \begin{array}{rl@{\dquad}rl}
 (\intE): &
 \Inf	{ \derLmuIU `G,x{:}\AoI{n} |- x : A_i | `D }
&
(\intI): &
\Inf	[n \geq 0, n \not= 1]
	{ \derLmuIU `G |- M : A_i | `D \quad (\forall \iotn)
	}{ \derLmuIU `G |- M : \AoI{n} | `D }
\\[6mm]
(\arrI): &
 \Inf	{ \derLmuIU `G, x{:}A |- M : B | `D 
	}{ \derLmuIU `G |- `l x.M : A\arrow B | `D }
&
 (\arrE): &
 \Inf	[n \geq 1]
	{ \derLmuIU `G |- M : \un{n}(A_i\arrow B_i) | `D 
	 \quad
	 \derLmuIU `G |- N : A_i | `D 
	 ~(\forall \iotn)
	}{ \derLmuIU `G |- MN : \BoU{n} | `D }
 \\[6mm]
  (\unE): &
\multicolumn{3}{l}{
 \Inf	[\BoUm \seq \AoUn]
	{ \derLmuIU `G |- M : \BoU{m} | `b{:}\AoUn,`a{:}B,`D 
	}{ \derLmuIU `G |- `m`a.[`b]M : B | `b{:}\AoUn,`D }
\dquad
 \Inf	[\BoUm \seq \AoUn]
	{ \derLmuIU `G |- M : \BoU{m} | `b{:}\AoUn,`D 
	}{ \derLmuIU `G |- `m`b.[`b]M : \AoU{n} | `D }
~}
 \end{array} \] }
We write $ \derLmuIU `G |- M : A | `D $ if this statement is derivable using these rules.
 \end{definition}
We will normally not distinguish between the two variants of $(\unE)$.

Notice that the traditional $(\arrE)$ of Definition~\ref{strict type assignment definition} is obtained by taking $n = 1$.
Moreover, all $A_i$ can be intersection types, so each can be $\Top$; this is why that rule is not formulated using $ \derLmuIU `G |- N : \AoI{n} | `D $.
If $x{:}\BoUm \ele `G$, then we can only derive $ \derLmuIU `G |- x : \BoU{m} | `D $, \emph{i.e.} we have no way of eliminating a union assigned to a term variable.
Moreover, we have no traditional rules $(\unI)$ and $(\unE)$ on terms, which would be formulated (as in \cite{Barbanera-Dezani-Liguoro-IaC'95}), via
 \[ \begin{array}{rl@{\tquad}rl}
(\unI) : &
\Inf	{\derLmu `G |- M : A | `D }
	{\derLmu `G |- M : A\union B | `D }
&
(\unE) : &
\Inf	{ \derLmu `G |- N : A\un B | `D 
	 \quad
	 \derLmu `G,x{:}A |- M : C | `D 
	 \quad
	 \derLmu `G,x{:}B |- M : C | `D 
	}{\derLmu `G |- M[N/x] : C | `D }
 \end{array} \]
These create the subject-reduction problem dealt with in that paper by limiting to parallel reduction.

Notice that both the strict system for the {\LC} and the system for $`l`m$ are true subsystems; the first by not allowing union types, or alternative conclusions, the second by limiting to Curry types.

 \begin{lemma}[Generation lemma] \label{generation}
 \begin{itemize}
 \firstitem If $\derLmuIU `G |- x : A | `D $, then there exists $x{:}B \ele `G$ such that $B \seq A$. 
 \item If $\derLmuIU `G |- `lx.M : A | `D $, then there exists $B_i, \Cotn$ such that $A = \int{n}(B_i \arr C_i)$, and, for all $\iotn$, $ \derLmuIU `G,x{:}B_i |- M : C_i | `D $. 
 \item If $\derLmuIU `G |- MN : A | `D $, then $A = \AoUn$, and for every $\iotn$ there exists $B_i \ele \Types$ such that $\derLmuIU `G |- M : \un{n}(B_i\arr A_i) | `D $ and $\derLmuIU `G |- N : B_i | `D $.
 \item If $\derLmuIU `G |- `m`a.[`b]M : A | `D $, then there are $\Aotn$ such that $A = \AoIn$, and, for every $\iotn$, there are $m_i,m'_i$ with $m'_i \seq m_i$ and $B^i_j~(\forall j \ele \underline{m_i})$ such that $\derLmuIU `G |- M : \un{m'_i}B^i_k | `b{:}\un{m_i}B^i_j,`a{:}A_i,`D $. 
 \end{itemize}
 \end{lemma}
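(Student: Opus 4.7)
The plan is to prove all four claims simultaneously by induction on the derivation of $\derLmuIU `G |- M : A | `D $, with case analysis on the last rule applied. Because the four subject shapes (variable, $`l$-abstraction, application, and context switch $`m`a.[`b]\cdot$) are syntactically distinct, each case concerns only the rules whose conclusion can bear that shape: $(\intE)$ and $(\intI)$ for a variable; $(\arrI)$ and $(\intI)$ for $`l x.M$; $(\arrE)$ and $(\intI)$ for $MN$; and $(\unE)$ and $(\intI)$ for $`m`a.[`b]M$. In every case the shape-specific rule delivers the stated form directly with $n=1$, so the work is to handle the $(\intI)$ case and to glue the induction hypothesis on the strict components back together.

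For the variable case, $(\intE)$ gives $x{:}B \ele `G$ with $B = \AoI{k}$ and $A$ equal to one of its components, whence $B \seq A$ by the first defining clause of $\seq$. In the $(\intI)$ case, $A$ is either $\Top$ (dominated by any $B$) or an intersection $A'_1 \inter \dots \inter A'_k$ with $k \geq 2$, and the induction hypothesis applied to each sub-derivation delivers a common $x{:}B \ele `G$ (the environment is shared) with $B \seq A'_l$ for every $l$, so $B \seq A$ by the second defining clause of $\seq$. The $`l x.M$ case is analogous: $(\arrI)$ yields the claim with $n=1$, while $(\intI)$ concatenates the index sets delivered by the induction hypothesis on each component, the collected premises $\derLmuIU `G,x{:}B_i |- M : C_i | `D $ already having the required form.

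For the application case, $(\arrE)$ supplies $A = B_1 \union \dots \union B_n$ together with the two families of premises directly; in the $(\intI)$ case, $A$ decomposes into strict components, each of which is a union produced by an $(\arrE)$ reached via the induction hypothesis. The $`m`a.[`b]M$ case is treated similarly: $(\unE)$ gives the statement with $n=1$, and $(\intI)$ combines several $(\unE)$ conclusions into the stated intersection; the indices $m_i$, $m'_i$ and the family $B^i_j$ are allowed to depend on $i$ precisely because each strict component $A_i$ of $A$ arises from its own $(\unE)$, each with its own side-condition $\un{m'_i}B^i_k \seq \un{m_i}B^i_j$.

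The main routine difficulty will be the bookkeeping in the $`m`a.[`b]M$ case: several $(\intI)$-combined sub-derivations may apply $(\unE)$ with different union types assigned to $`b$, and one must verify that the right environment $`D $ produced at the conclusion remains coherent with the strict component for each $i$. This follows by recording the union types used in each $(\unE)$ premise and indexing them by the position $i$ in the combined intersection, leaning on the fact that $\seq$ between union types is characterised pointwise on their strict components.
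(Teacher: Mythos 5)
Your strategy is the same as the paper's: the paper proves this lemma with the single phrase ``by easy induction'', and your induction on the derivation with a case split on the last rule is precisely that induction made explicit; the variable, abstraction and $\mu$-abstraction cases are handled correctly, including the gluing together of $(\intI)$-premises using the fact that the environment is shared.

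One subcase does not close as you describe it. For applications you say that under $(\intI)$ the type ``decomposes into strict components, each of which is a union'', but the third item of the lemma asserts that $A$ itself equals $A_1 \cup \dots \cup A_n$, and an intersection of two or more unions is \emph{not} a union in this grammar (the paper allows intersections of unions but explicitly forbids unions of intersections, so the two shapes do not coincide). Hence when the last rule is $(\intI)$ with $n \geq 2$ the stated conclusion is not derivable --- the item is false as literally written for intersection $A$, so no bookkeeping can rescue it. The paper's own usage resolves this: item three is only ever invoked for strict $A$ (e.g.\ in the term substitution lemma, where $A = \cup_{\underline{r}}A_j$ is assumed at the outset and intersection types are recovered afterwards via Lemma~\ref{inte derivable}); for strict $A$ the side condition $n \neq 1$ of $(\intI)$ excludes that rule (and $\Top$ is not strict), so $(\arrE)$ is forced and the case is immediate, with no induction needed. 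A similar small leak occurs in your variable case at $A = \Top$: the nullary instance of $(\intI)$ types $x$ with $\Top$ even when $x$ does not occur in $\Gamma$, so the promised $x{:}B \in \Gamma$ need not exist. Neither point undermines your overall plan, but you should either restrict items one and three to strict $A$, or state explicitly that the $(\intI)$ cases there are vacuous under that restriction rather than claiming to discharge them.
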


 \begin{proof}
By easy induction.
\QED
 \end{proof}
 
The system $\TurnLmuIU$ does not have \emph{choice}, \emph{i.e.}~we cannot show that, if $\derLmuIU `G |- M : A\union B | `D $, then either $\derLmuIU `G |- M : A | `D $ of $\derLmuIU `G |- M : B | `D $ as would hold in an intuitionistic system. Take:
 \[ \begin{array}{c}
\Inf	[\unE]
	{\Inf	[\arrR]
		{\Inf	[\unE]
			{\Inf	[\intE]
				{\derLmuIU x{:}A |- x : A | `b{:}B,`d{:}A\union (A\arr B) }
				}{\derLmuIU x{:}A |- `m`b.[`d]x : B | `d{:}A\union (A\arr B) }
			}{\derLmuIU {} |- `lx.`m`b.[`d]x : A\arr B | `d{:}A\union (A\arr B) }
	}{\derLmuIU {} |- `m`d . [`d] (`lx.`m`b . [`d]x ) : A\union (A\arr B) | {} }
 \end{array} \]
Notice that we cannot derive $\derLmuIU {} |- `m`d . [`d] (`lx.`m`b . [`d]x ) : A | {} $, nor $\derLmuIU {} |- `m`d . [`d] (`lx.`m`b . [`d]x ) : A\arr B | {} $, since the two occurrences of $[`d]$ need to be typed differently, but with related types.
This is comparable to both $A$ and $A\arrow B$ to be needed as assumption for $x$ to type $`lx.xx$.

We can show that a general $(\intE)$ (for all terms) is admissible.

 \begin{lemma} \label{inte derivable}
If $\derLmuIU `G |- M : \AoI{n} | `D $, then $\derLmuIU `G |- M : A_i | `D $, for all $\iotn$.
 \end{lemma}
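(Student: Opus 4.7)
The plan is to proceed by straightforward case analysis on $n$. For $n = 0$, we have $\AoI{0} = \Top$, and the conclusion ``$\derLmuIU `G |- M : A_i | `D $ for all $\iotn$'' is vacuous since the index set is empty. For $n = 1$, the hypothesis is literally the conclusion, so nothing is to be done.

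The only substantial case is $n \geq 2$. Here I would inspect the last rule applied in the given derivation of $\derLmuIU `G |- M : \AoI{n} | `D $. Reading off the shape of the conclusion type for each rule of $\TurnLmuIU$: rule $(\intE)$ concludes with one of the $A_i$ taken from the context, which is a strict type; $(\arrI)$ concludes with an arrow type; $(\arrE)$ concludes with a union type $\BoU{n}$; and the two forms of $(\unE)$ conclude with a strict type $B$ and with a union $\AoU{n}$ respectively. None of these conclusions can syntactically match a proper intersection. Only $(\intI)$ has a conclusion of the form $\AoI{n}$, and its side condition $n \not= 1$ admits both $n = 0$ and $n \geq 2$. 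Hence the last rule must be $(\intI)$, and its premises are precisely the desired statements $\derLmuIU `G |- M : A_i | `D $ for each $\iotn$.

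There is essentially no obstacle: the claim reduces to the observation that a proper intersection type can only be constructed via $(\intI)$. The only delicate point is that types are identified up to $\equ$, but since $\equ$ relates intersections only to intersections (up to idempotence, commutativity and associativity of $\inter$), the syntactic case analysis remains valid after choosing a canonical representative of the equivalence class. In writing the actual proof, I would phrase the argument as an induction on the structure of the derivation and dispatch all rules other than $(\intI)$ by pointing out that their conclusions have the wrong shape.
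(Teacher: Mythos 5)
Your argument is correct and is exactly the ``easy'' argument the paper has in mind (its own proof of this lemma is literally just the word \emph{Easy}): a proper intersection $\AoI{n}$ with $n\neq 1$ can only be the conclusion of rule $(\intI)$, whose premises are the desired judgements, and the cases $n=0$ and $n=1$ are vacuous or trivial. Your explicit acknowledgement of the identification of types up to $\equ$ is a sensible precaution, and phrasing the case analysis as induction on the derivation is the right way to make it rigorous.
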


 \begin{proof}
Easy.
\QED
 \end{proof}

The following result is standard.

 \begin{lemma} [Thinning \& Weakening] \label{thinning} \label{weakening} \label{closed for seq}
 \begin{enumerate}
 \firstitem Let $\derLmuIU `G |- M : A | `D $; take $`G' = \Set{x{:}B \ele `G \mid x \ele \fv(M)}$ and $`D' = \Set{`a{:}B \ele `D \mid `a \ele \fv(M)}$, then $\derLmuIU `G' |- M : A | `D' $.
 \item Let $\derLmuIU `G |- M : A | `D $, and $ `G' \seq `G $ and $`D \seq `D' $, then $\derLmuIU `G' |- M : A | `D' $.
 \end{enumerate}
 \end{lemma}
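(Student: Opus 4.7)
\proof
I would prove both parts by straightforward induction on the derivation of $\derLmuIU `G |- M : A | `D $, with the weakening part (2) requiring slightly more care at the axiom rule because of the interplay between $\seq$ on left environments and the strictness requirement.

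For part (1), the plan is to trace through each rule. For $(\intE)$, if $M = x$, only $x{:}\AoIn$ is genuinely used from $`G$, and nothing from $`D$ is needed for the variable itself; dropping all other entries yields exactly $`G'$ and $`D' = \emptyset$, which suffices since $\fv(x) = \{x\}$ covers the relevant context. For $(\intI)$ and $(\arrE)$, one applies the induction hypothesis to each premise and observes that free variable sets of premises are subsets of (or coincide with) the free variable set of the conclusion. For $(\arrI)$ on $`lx.M$, the IH on the premise restricts $`G,x{:}A$ to $\fv(M)$: if $x \ele \fv(M)$ the restricted environment contains $x{:}A$ and we reapply $(\arrI)$; if $x \notele \fv(M)$, the restricted environment has no binding for $x$, in which case a trivial weakening on the left (adding $x{:}A$) restores the shape needed before applying $(\arrI)$. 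The dual argument applies to $(\unE)$, where the bound context variable $`a$ is removed from $`D$ when $`a \notele \fv(`m`a.[`b]M)$.

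For part (2), the interesting case is the variable rule $(\intE)$. Given $x{:}\AoIn \ele `G$ with $`G' \seq `G$, by definition of $\seq$ on environments there is $x{:}C \ele `G'$ with $C \seq \AoIn$; composing with $\AoIn \seq A_i$ gives $C \seq A_i$. Writing $C = \int{k}C_l$ in strict form, the key technical fact I would invoke (a standard property of $\seq$ on strict intersection types, derivable by induction on the $\seq$-derivation) is that $\int{k}C_l \seq A_i$ with $A_i$ strict implies that there is some $l$ with $C_l \seq A_i$, and indeed that $A_i$ is then obtainable from $C_l$ by a finite combination of arrow-contravariance and intersection choice within strict types; in the strict system each such $A_i$ is derivable from $x{:}C$ via $(\intE)$ (and possibly a $\seq$-step absorbed into the strict decomposition). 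Thus $\derLmuIU `G' |- x : A_i | `D' $ holds. For $(\arrI)$, $(\arrE)$ and $(\intI)$ the induction step is routine, noting that strengthening the left environment and weakening the right preserves both premises. For $(\unE)$, the side condition $\BoUm \seq \AoUn$ in the premise is preserved when passing to $`D'$ with $`D \seq `D'$, because the context variable $`b$ in $`D'$ will be assigned some $\AoU{n'} \beq \AoUn$, and union extension on the right is precisely what $(\unE)$'s side condition accommodates.

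The main obstacle is the axiom case of part (2): one must show that $C \seq A_i$ with $A_i$ strict really does yield a derivation $\derLmuIU `G' |- x : A_i | `D'$, which depends on the characterisation of $\seq$ restricted to strict types. This is a standard lemma in the strict intersection literature (\emph{cf.}\ \cite{Bakel-TCS'92}), and inherits cleanly here because union types are not permitted inside intersections on the left and hence do not complicate the analysis of $C \seq A_i$. Once that lemma is in hand, both thinning and weakening follow by the induction outlined above.
\QED
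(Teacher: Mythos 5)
Your strategy---induction on the structure of the derivation---is the only reasonable one and is all the paper itself offers (its proof is the single phrase ``by easy induction''), and part~(1) together with the non-axiom cases of part~(2) are handled correctly. The genuine problem lies in the axiom case of part~(2), which you rightly single out as the crux but then discharge using a false premise. You assert that ``union types are not permitted inside intersections on the left''; the paper says the opposite: union types are strict types, left environments assign \emph{intersections of strict types} to term variables, and the text explicitly notes that an intersection of union types is allowed. Consequently the characterisation of $\seq$ you import from the union-free strict intersection literature does not cover all instances of $C \seq A_i$. The extended $\seq$ contains the injection $B_j \seq B_1 \union \dots \union B_m$, so for example $\{\,x{:}\varphi_1\,\} \seq \{\,x{:}\varphi_1 \union \varphi_2\,\}$ as left environments; part~(2) would then require $x : \varphi_1 \union \varphi_2$ to be derivable from the left environment $x{:}\varphi_1$, yet the only types assignable to $x$ there are $\varphi_1$ and intersections built from it, since rule $(\intE)$ only selects components and the system deliberately has no union introduction on terms. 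So the ``standard lemma'' you invoke settles $C \seq A_i$ only when that inequality is derived without the union rules, and your argument has a real gap at exactly the point you flagged as the main obstacle.

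In fairness, this gap arguably reflects an imprecision in the statement rather than in your proof alone: every use of weakening elsewhere in the paper (taking $C = \Top$, or enlarging an intersection on the left, or enlarging a union on the right where the side condition of $(\unE)$ absorbs it) stays within the fragment of $\seq$ generated by the intersection rules on the left and the union rules on the right. If you restrict the hypothesis $`G' \seq `G$ to that fragment---or prove the characterisation of $\seq$ on intersections of strict types afresh for this type language and observe that the union-injection case must be excluded on the left---your induction goes through as written. As it stands, the appeal to \cite{Bakel-TCS'92} does not close the case.
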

 
 \begin{proof}
By easy induction.
\QED
 \end{proof}

As a consequence, the following rules are admissible:

 \[ \begin{array}{rl@{\dquad}rl}
(\Thin) : & 
\Inf	{\derLmuIU `G |- M : A | `D }
	{\derLmuIU \Set{x{:}B \ele `G \mid x \ele \fv(M)} |- M : A | \Set{`a{:}B \ele `D \mid `a \ele \fv(M)} }
\\[5mm]
(\Weak) : & 
\Inf	[`G' \seq `G,`D \seq `D' ]
	{\derLmuIU `G |- M : A | `D }
	{\derLmuIU `G' |- M : A | `D' }
 \end{array} \]

 \section{Subject reduction and expansion}
We will now show our main results, by showing that our notion of type assignment is sound and complete.
We start by showing two variants of the substitution lemma.


 \begin {lemma} [Term substitution lemma] \label {term substitution lemma}
Let $A$ be strict; $ \derLmuIU `G |- M[N/x] : A | `D $ if and only if there exists $C \ele \Types$ such that $ \derLmuIU `G,x{:}C |- M : A | `D $ and $ \derLmuIU `G |- N : C | `D $.
 \end {lemma}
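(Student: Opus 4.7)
The natural approach is induction on the structure of $M$ (equivalently, on the derivation), proving both implications simultaneously. For the induction to go through cleanly under $(\intI)$, I would first drop the strictness restriction on $A$ and prove the biconditional for arbitrary $C \ele \Types$ and $A \ele \Types$; the lemma as stated follows as a special case.

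For the ($\Leftarrow$) direction, I would induct on the derivation of $\derLmuIU `G,x{:}C |- M : A | `D $. The only nontrivial base case is $(\intE)$ applied to $x$: here $M[N/x] = N$ and $C = \AoIn$ with $A = A_j$ for some $j$, so Lemma~\ref{inte derivable} applied to the hypothesis $\derLmuIU `G |- N : C | `D $ yields the required $\derLmuIU `G |- N : A_j | `D $. If instead $M$ is a variable $y \neq x$, the declaration $x{:}C$ is simply discarded by Thinning. All other cases follow by applying the induction hypothesis to each premise and then reapplying the same rule; for $(\arrI)$ and $(\unE)$, Barendregt's convention ensures that the bound $y$ (resp.\ $`a$) does not occur in $N$, so the extra bound assumption is inert.

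For the ($\Rightarrow$) direction, I would induct on $M$, using Lemma~\ref{generation} to decompose the derivation of $\derLmuIU `G |- M[N/x] : A | `D $ at each step. In the variable case $M = x$, take $C = A$; the statement $\derLmuIU `G,x{:}A |- x : A | `D $ is a one-step use of $(\intE)$. If $M = y \neq x$, take $C = \Top$ and obtain $\derLmuIU `G |- N : \Top | `D $ by $(\intI)$ with $n=0$. For an application $M = M_1 M_2$, Generation splits the derivation into one for $M_1$ at type $\un{n}(B_i \arr A_i)$ and one for $M_2$ at each $B_i$; the induction hypothesis supplies witness types $C_0, C_1, \dots, C_n$, and setting $C = C_0 \inter C_1 \inter \dots \inter C_n$ works: $(\intI)$ rebundles the typings of $N$, Weakening lifts each component derivation of $M_1, M_2$ to the larger environment $`G,x{:}C$, and $(\arrE)$ reassembles the whole. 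The abstraction case is analogous, using Barendregt's convention so that the bound $y$ is absent from $N$, so Thinning extracts the desired $\derLmuIU `G |- N : C | `D $ from the $y$-extended environment.

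The delicate case is $M = `m`a.[`b]M'$: this is the main obstacle. Generation exposes a sub-derivation of $M'$ against the richer right environment $`b{:}\AoUn,`a{:}B,`D $ together with the union side-condition $\BoUm \seq \AoUn $, both of which must be propagated correctly. The induction hypothesis supplies $C$ with $\derLmuIU `G |- N : C | `b{:}\AoUn,`a{:}B,`D $; since $`a$ is bound and therefore, by Barendregt's convention, not free in $N$, Thinning removes $`a{:}B$ to leave exactly $`b{:}\AoUn,`D $, matching the right environment of the conclusion. Re-applying $(\unE)$ (with the same side-condition $\BoUm \seq \AoUn $) to the induced derivation of $M'$ under $`G,x{:}C$ closes the case. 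This is precisely where the asymmetry between left environments (intersections only) and right environments (which carry unions) forces careful bookkeeping, and where a naive argument that ignored bound context variables in $N$ would fail.
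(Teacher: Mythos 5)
Your proof is correct and follows essentially the same route as the paper's: induction on the structure of $M$, using the Generation lemma, Lemma~\ref{inte derivable}, Thinning and Weakening, with the same witness choices ($C=\Top$ for $y\neq x$, $C=A$ for $x$, and an intersection of the inductively obtained witnesses for applications). The only differences are organisational — you run the ($\Leftarrow$) direction as induction on the derivation and generalise to non-strict $A$ up front rather than as a closing remark, and you spell out the $`l$- and $`m$-abstraction cases that the paper dispatches with ``by induction'' — none of which changes the substance.
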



 \begin{proof}
By induction on $M$.

 \begin {description} \itemsep 4pt

 \item [$ M\same x :$]

 \begin {description} 

 \shiftitem [$ \Then \, :$~]
If $\derLmuIU `G |- x[N/x] : A $, then $ \derLmuIU `G,x{:}A |- x : A $ and $ \derLmuIU `G |- N : A $.

 \item[$ \If \, :$~]
If $ \derLmuIU `G |- x : A | `D $, then there exists $\Aotn$ such that $A = A_k$ from some $k \ele \n$, and $`G = `G',x{:}\AoIn$, so $ \derLmuIU `G',x{:}\AoI{n} |- x : A_k | `D $.
From $\derLmuIU `G |- N : \AoI{n} | `D $ and Lemma~\ref {inte derivable}, we have $ \derLmuIU `G |- N : A | `D $, so $ \derLmuIU `G |- x[N/x] : A | `D $.

 \end {description}

 \item [$ M\same y\not= x :$]

 \begin {description}

 \shiftitem [$ \Then \, :$]
By Lemma \ref{thinning}, since $y[N/x] \same y$, and $x \notele \FV{y}$.

 \item [$ \If \, :$] $\derLmuIU `G |- y[N/x] : A | `D \Then \derLmuIU `G |- y : A | `D $.
Take $ C = \Top$; by Lemma~\ref{closed for seq}, $\derLmuIU `G,x{:}{\Top} |- y : A | `D $.

 \end {description}

\Comment{
 \item [$M\same `l y.M' :$]
By induction.

Because of Lemma~\ref{inte derivable}, we can assume $A$ is strict.

$ \kern3mm \begin{array}[t]{ll}
 \Exists C \Pr[\derLmuIU `G,x{:}C |- `ly.M : A | `D \And \derLmuIU `G |- N : C | `D ] & \Iff (\arrI)
\\
 \Exists C, `m, `j \Pr[\derLmuIU `G,x{:}C,y{:}`m |- M : `j | `D \And A = `m\arr `j \And \derLmuIU `G |- N : C | `D ] & \Iff (\IH)
\\
 \Exists `m, `j \Pr[\derLmuIU `G,y{:}`m |- M[N/x] : `j | `D \And A = `m\arr `j ] & \Iff (\arrI)
\\
 \derLmuIU `G |- `ly.(M[N/x]) : A | `D & \Iff
\\
 \derLmuIU `G |- (`ly.M)[N/x] : A | `D
 \end{array} $
}

 \item [$ M' = M_1M_2 :$] 
Let $A = \un{r}A_j$, with $r \geq 1$.
Notice that $(M_1M_2)[N/x] = M_1[N/x]M_2[N/x]$.

\begin{description} 
 
\item [$\Then \, : $~] 
Then, by Lemma~\ref{generation}, there are $D_j \ele \Types~(\forall j \ele \r)$ such that $\derLmuIU `G |- M_1\,[ N/x ] : \un{r}(D_j\arr A_j) | `D $ and $\derLmuIU `G |- M_2\,[ N/x ] : D_j | `D $, for all $j \ele \r$.
Then by induction, there are $C_1, C^1_2, \ldots, C^r_2 $ such that: 

 \begin{itemize}

 \item $ \derLmuIU `G,x{:}C_1 |- M_1 : \un{r}(D_j\arr A_j) | `D $ and $\derLmuIU `G |- N : C_1 | `D $, as well as

 \item $ \derLmuIU `G,x{:}C^j_2 |- M_2 : D_j | `D $  and $\derLmuIU `G |- N : C^j_2 | `D $, for all $j \ele \r$.

 \end{itemize}
Take $C = C_1 \int C^1_2 \int \dots \int C^r_2 $;
then by weakening and $(\arrE)$, we get $ \derLmuIU `G,x{:}C |- M_1M_2 : A | `D $; notice that $\derLmuIU `G |- N : C | `D $ by $(\intI)$.

 \item [$\If \, :$~] 
If $ \derLmuIU `G,x{:}C |- M_1M_2 : \un{r}A_j | `D $, then by Lemma~\ref{generation} there exists $D_j \ele \Types~(j \ele \r)$ such that $\derLmuIU `G,x{:}C |- M_1 : \un{r}(D_j\arr A_j) | `D $ and $\derLmuIU `G,x{:}C |- M_2 : D_j | `D $, for $j \ele \r$.
Then, by induction, $\derLmuIU `G |- M_1 [ N/x ] : \un{r}(D_j\arr A_j) | `D $ and $\derLmuIU `G |- M_2 [ N/x ] : D_j | `D $ for all $j \ele \r$; the result follows by $(\arrE)$.

 \end{description}

 \item [$ M\same `l y.M' $; $ M\same {`m`a.[`b]}M' :$]
By induction.
\Comment{
$ \kern-14mm \begin{array}[t]{ll}
 \kern14mm 
\Exists C \Pr[\derLmuIU `G,x{:}C |- [`a]M' : A | `D \And \derLmuIU `G |- N : C | `D ] & \Iff (\unE)
\\
\multicolumn{2}{l}{%
A = \bot \And \Exists C,\Aotn \Pr[\derLmuIU `G,x{:}C |- M' : A_i | `a{:}`D \And `a{:}\AoUn \ele `D \And \derLmuIU `G |- N : C | `D ]
} \\
 & \Iff (\IH)
\\
A = \bot \And \Exists \Aotn \Pr[\derLmuIU `G |- M'[N/x] : A_i | `a{:}`D \And `a{:}\AoUn \ele `D & \Iff (\unE)
\\
 \derLmuIU `G |- ([`a]M')[N/x] : A | `D
 \end{array}$

 \item [$(M\same `m`a.M')$]
Wlog, assume $A$ is non an intersection.

$ 
\begin{array}[t]{ll}
\Exists C \Pr[\derLmuIU `G,x{:}C |- `m`a.M' : A | `D \And \derLmuIU `G |- N : C | `D ] & \Iff (`m)
\\
\Exists C \Pr[\derLmuIU `G,x{:}C |- M' : {\bot} | `a{:}A,`D \And \derLmuIU `G |- N : C | `D ] & \Iff (\IH)
\\
\derLmuIU `G,x{:}C |- M'[N/x] : {\bot} | `a{:}A,`D & \Iff (`m)
\\
 \derLmuIU `G |- (`m`a.M')[N/x] : A | `D
 \end{array}$

{\Redcol Here the type derived can be an intersection of union types; do I want that?}
}

 \end {description}
 \end{proof}

Because of Lemma~\ref{inte derivable}, we can extend the above results also to the case that $A$ is an intersection type; notice that this is implicitly used in the third case, where $D_j$ can be an intersection type.

Dually, we have:


 \begin{lemma} [Structural substitution lemma] \label{structural substitution lemma}
$\derLmuIU `G |- M' [ N{`.}`g/`a ] : C | `g{:}\BoUn,`D $ if and only if there are 
%
$\Aotn$ such that for every $A_i$ there exists a $B_i$ such that, for all $\iotn$, $\derLmuIU `G |- N : A_i | `D $, and $ \derLmuIU `G |- M' : C | `a{:}\un{n}(A_i\arr B_i),`D $.
 \end{lemma}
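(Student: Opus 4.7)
My plan is to prove the lemma by induction on the structure of $M'$, handling both directions of the biconditional in parallel. In each case, I apply Lemma~\ref{generation} to peel off the head constructor of the derivation, invoke the induction hypothesis on the subterm(s), and reassemble using the corresponding typing rule. As in the proof of the preceding term substitution lemma, it suffices to treat the case where $C$ is strict, with Lemma~\ref{inte derivable} and rule $(\intI)$ lifting the argument to an arbitrary intersection type.

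For the cases where the substitution $[N{`.}`g/`a]$ commutes with the head constructor of $M'$ --- namely $M' = x$ or $y$, $M' = `ly.M''$, $M' = M_1M_2$, and $M' = `m`d.[`b]M''$ with $`b \ne `a$ --- the argument is routine. For example, for $M' = M_1M_2$ in the forward direction, generation produces derivations of $M_1[N{`.}`g/`a]$ and $M_2[N{`.}`g/`a]$; the IH on each yields the $A_i$'s required for $N$ together with corresponding typings of $M_1$ and $M_2$; intersecting the two collections via $(\intI)$, aligning environments via Lemma~\ref{weakening}, and applying $(\arrE)$ reassembles the derivation, and the backward direction is dual. The variable case is degenerate and is handled by taking $n = 0$, which imposes no constraints on $N$.

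The main obstacle is the case $M' = `m`d.[`a]M''$, where $M'[N{`.}`g/`a] = `m`d.[`g](M''[N{`.}`g/`a]N)$; this is the only case in which the substitution actually introduces the extra application of $N$. For the forward direction, Lemma~\ref{generation} applied first to the outer $`m$-binder and then to the inner application yields types $\Aotn$ and $B_i$ together with derivations $\derLmuIU `G |- M''[N{`.}`g/`a] : \un{n}(A_i\arr B_i) | `g{:}\BoUn,`d{:}C,`D $ and $\derLmuIU `G |- N : A_i | `g{:}\BoUn,`d{:}C,`D $ for each $\iotn$; since $`g$ and $`d$ do not occur in $N$, Lemma~\ref{thinning} restricts the environment of $N$ to $`D$, and the IH on $M''$ converts the first into a derivation of $M''$ with $`a{:}\un{n}(A_i\arr B_i)$ in its context, from which $(\unE)$ derives the required typing of $M'$. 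The backward direction reverses this construction: the IH on $M''$ produces $M''[N{`.}`g/`a]$ with $`g{:}\BoUn$ in its context; Lemma~\ref{weakening} adjusts the environments of the $N$-derivations; rule $(\arrE)$ forms the application; and rule $(\unE)$ rebinds $`d$. A minor technicality is that generation gives $M''$ only a sub-union of $\un{n}(A_i\arr B_i)$ as its type, but the $(\unE)$ side condition $\BoU{m'} \seq \BoUn$ permits this slack, with the unused $A_i$-derivations of $N$ simply discarded.
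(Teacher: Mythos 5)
Your overall strategy---induction on the structure of $M'$, with generation, the induction hypothesis and reassembly in each case, both directions treated in parallel---is the same as the paper's, and your routine cases and the backward direction of the critical case agree with the paper's proof. The forward direction of the case $M' = `m`d.[`a]M''$, however, contains a genuine gap.

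After the substitution, \emph{every} occurrence of $`a$ in $`m`d.[`a]M''$ has been renamed to $`g$ and supplied with a copy of $N$: the outermost one becomes the new application $[`g](M''[ N{`.}`g/`a ]\,N)$, but $M''[ N{`.}`g/`a ]$ in general still contains further occurrences of $`g$. The union type $\BoUn$ of $`g$ is therefore shared between the two: generation applied to the new outer application accounts only for a sub-union, say $\un{r}D_l$, of $\BoUn$, yielding argument types $E_l$ with $ \derLmuIU `G |- N : E_l | `D $ for $l \ele \r$ \emph{only}, while the remaining components of $\BoUn$ are consumed inside $M''[ N{`.}`g/`a ]$ and must be recovered through the induction hypothesis. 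Accordingly, the paper applies the induction hypothesis with $`g$ carrying only the residual type $\un{n{-}r}B_i$, obtains fresh argument types $A_i$ for those components, and takes the final type of $`a$ to be the \emph{combined} union $\un{r}(E_l\arr D_l) \Union \un{n{-}r}(A_i\arr B_i)$. You instead apply the induction hypothesis with the full $`g{:}\BoUn$ and then identify the argument types it returns with those produced by generation on the outer application; these are two independently, existentially quantified families, and nothing makes them coincide. As a consequence your final $(\unE)$ step---which needs the type derived for $M''$ (the arrow union coming from the outer application) to be $\seq$ the type recorded for $`a$ (the arrow union delivered by the induction hypothesis)---does not go through, and the type you end up assigning to $`a$ fails to record one of the two families of arrow types that are both needed. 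Your closing remark that unused derivations of $N$ are ``simply discarded'' points the wrong way: the outer application provides \emph{fewer} typings of $N$ than the lemma requires, not more, and the missing ones must come from the induction hypothesis. A smaller slip of the same nature occurs in your application case: the contributions of $M_1$ and $M_2$ to the type of $`a$ must be merged by taking the \emph{union} of the two families of arrow types, as the paper does; ``intersecting the two collections via $(\intI)$'' is the right move for the type of the term variable in the term substitution lemma, not for the context variable here.
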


 \begin{proof}
We only show the interesting cases.

 \begin{description} \itemsep3\point

\item [$ M' = x :$] 
Then $x [ N{`.}`g/`a ] = x $; as above the result follows, in either direction, by thinning and weakening.

\item [$ M' = `lx.M :$] 
By induction. %

 \item [$ M' = M_1M_2 :$] 
Then $ M_1M_2 [ N{`.}`g/`a ] = M_1 \, [ N{`.}`g/`a ]~M_2 \, [ N{`.}`g/`a ] $; assume $C$ is strict.

\begin{description} 
 
\item [$\Then \, :$~] 
Let $C = \un{r}C_j$, with $r \geq 1$.
Then, by Lemma~\ref{generation}, there exists $D_j \ele \Types~(\forall j \ele \r)$ such that $\derLmuIU `G |- M_1\,[ N{`.}`g/`a ] : \un{r}(D_j\arr C_j) | `g{:}\BoUm,`D $ and $\derLmuIU `G |- M_2\,[ N{`.}`g/`a ] : D_j | `g{:}\BoUm,`D $, for $j \ele \r$.
Then by induction, there are $A_i ~ (\forall i \ele \k)$ and $A'_i ~(\forall i \ele \l)$ with $k + l = m$ such that 

 \begin{itemize}

 \item 
$ \derLmuIU `G |- M_1 : \un{r}(D_j\arr C_j) | `a{:}\un{k}(A_i\arr B_i),`D $ and, for all $i \ele \k$, $\derLmuIU `G |- N : D_i | `D $, as well as

 \item
$ \derLmuIU `G |- M_2 : D_j | `a{:}\un{l}(A'_i\arr B_i),`D $ for all $j \ele \r$, and, for all $i \ele \l$, $\derLmuIU `G |- N : A'_i | `D $.

 \end{itemize}
Then by weakening and $(\arrE)$, we get $ \derLmuIU `G |- M_1M_2 : \un{r}C_j | `a{:}\un{k}(A_i\arr B_i) \Union \un{l}(A'_i\arr B_i),`D $; notice that $\derLmuIU `G |- N : F | `D $ for all $F \ele \Set{A_i ~ (\forall i \ele \k),~A'_i ~(\forall i \ele \l)}$.

 \item [$\If \, :$~] 
If $ \derLmuIU `G |- M_1M_2 : \un{r}C_j | `a{:}\un{n}(A_i\arr B_i),`D $, then $A = \AoUn$, and there are $D_j \ele \Types~(\forall j \ele \r)$ such that $\derLmuIU `G |- M_1 : \un{r}(D_j\arr C_j) | `a{:}\un{n}(A_i\arr B_i),`D $ and $\derLmuIU `G |- M_2 : D_j | `a{:}\un{n}(A_i\arr B_i),`D $, for $j \ele \r$.
Then, by induction, $\derLmuIU `G |- M_1 [ N{`.}`g/`a ] : \un{r}(D_j\arr C_j) | `g{:}\BoUm,`D $ and $\derLmuIU `G |- M_2 [ N{`.}`g/`a ] : D_j | `g{:}\BoUm,`D $ for all $j \ele \r$; the result follows by $(\arrE)$.
 \end{description}

 \item [{$ M' = `m`b.[`a]M :$}] 
 \begin{description}
 \item[$\Then \, :$ ]
Notice that $ `m`b.[`a]M [ N{`.}`g/`a ] = `m`b.[`g](M\,[ N{`.}`g/`a ] N) $ by definition.
From $\derLmuIU `G |- `m`b.[`g](M\,[ N{`.}`g/`a ] N) : C | `g{:}\BoUn,`D $, by Lemma~\ref{generation}, there are $r < n$ and $E_l,D_l ~ (\forall l \ele \r)$ such that, without loss of generality, $\un{r}(E_l\arr D_l) \Union \un{n{-}r}B_i = \un{n}B_i $, and the derivation is shaped like (notice that we can assume $`g,`b \not\in \fv(N)$):
 \[ \begin{array}{c}
\Inf	[\unE]
	{\Inf	[\arrE]
		{\Inf	[\Weak]
			{\InfBox{ \derLmuIU `G |- M\,[ N{`.}`g/`a ] : \un{r}(E_l\arr D_l) | `g{:}\un{n{-}r}B_i,`b{:}C,`D }
			}{ \derLmuIU `G |- M\,[ N{`.}`g/`a ] : \un{r}(E_l\arr D_l) | `g{:}\BoUn,`b{:}C,`D }
		 \quad
		 \InfBox{\derLmuIU `G |- N : E_l | `D } ~ (\forall l \ele \r) 
		}{ \derLmuIU `G |- M\,[ N{`.}`g/`a ] N : \un{r}D_l | `g{:}\BoUn,`b{:}C,`D }
	}{ \derLmuIU `G |- `m`b.[`g](M\,[ N{`.}`g/`a ] N) : C | `g{:}\BoUn,`D }
 \end{array} \]
Then, by induction, there exist $A_i ~ (\forall i \ele n{-}r)$ such that $ \derLmuIU `G |- M : B_n | `a{:}\un{n{-}r}(A_i\arr B_i),`b{:}C,`D $ and, for all $i \ele n{-}r$, $\derLmuIU `G |- N : A_i | `D $, and we can construct:
 \[ \begin{array}{c}
\Inf	[\unE]
	{\Inf	[\Weak]
		{\InfBox{ \derLmuIU `G |- M : \un{r}(E_l\arr D_l) | `a{:}\un{n{-}r}(A_i\arr B_i),`b{:}C,`D }
		}{ \derLmuIU `G |- M : \un{r}(E_l\arr D_l) | `a{:} \un{r}(E_l\arr D_l) \Union \un{n{-}r}(A_i\arr B_i),`b{:}C,`D }
	}{ \derLmuIU `G |- `m`b.[`a]M : C | `a{:} \un{r}(E_l\arr D_l) \Union \un{n{-}r}(A_i\arr B_i),`D }
 \end{array} \]
Notice that also $ \derLmuIU `G |- N : D | `D' $ for every $D \ele \Set{ E_1, \ldots, E_r, A_1, \ldots, A_{n{-}r} }$.

 \item[$\If\, :$ ]
If $ \derLmuIU `G |- `m`b.[`a]M : C | `a{:} \un{n}(A_i\arr B_i),`D $ and $ \derLmuIU `G |- N : A_i | `D' $ for every $\iotn$, then, by Lemma \ref{generation}, this derivation is constructed as follows:
 \[ \begin{array}{c}
\Inf	[\unE]
	{ \InfBox{ \derLmuIU `G |- M : \un{r}(A_i\arr B_l) | `a{:} \un{n}(A_i\arr B_i),`b{:}C,`D }
	}
	{ \derLmuIU `G |- `m`b.[`a]M : C | `a{:}\un{n}(A_i\arr B_i),`D }
 \end{array} \]
for some $r \seq n$. 
Then, by induction, $ \derLmuIU `G |- M [ N{`.}`g/`a ] : \un{r}(A_i\arr B_i) | `g{:}\BoUn,`b{:}C,`D $, and we can construct:
 \[ \begin{array}[b]{cc}
\dquad
\Inf	[\unE]
	{\Inf	[\arrE]
		{\InfBox{ \derLmuIU `G |- M [ N{`.}`g/`a ] : \un{r}(A_i\arr B_i) | `g{:}\BoUn,`b{:}C,`D }
		 \dquad
		 \InfBox{\derLmuIU `G |- N : A_i | `D } ~ (\forall i \ele \r)
		}{ \derLmuIU `G |- (M [ N{`.}`g/`a ])N : \un{r}B_i | `g{:}\BoUn,`b{:}C,`D }
	}{ \derLmuIU `G |- `m`b.[`g]M [ N{`.}`g/`a ] : C | `g{:}\BoUn,`D }
&
\\[-10\point] \QED
 \end{array} \]
 \end{description}
%
%

 \end{description}
 \end{proof}

Using these two lemmas, we can prove the two main results of this paper:

 \begin{theorem}[Subject expansion]
If $M \redlmu N$, and $ \derLmuIU `G |- N : A | `D $ ($A$ strict), then $ \derLmuIU `G |- M : A | `D $. 
 \end{theorem}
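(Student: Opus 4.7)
The proof proceeds by induction on the derivation of the reduction $M \redlmu N$, which is the compatible closure of the logical $(\beta)$, structural $(\mu)$, and renaming rules. The contextual cases are routine: apply the Generation Lemma to decompose the derivation for $N$, invoke the induction hypothesis on the contracted subterm, and rebuild the outer derivation by re-applying the same rule, relying on Thinning and Weakening where needed. The genuine content of the proof therefore lies in the three base cases, for which the two Substitution Lemmas do essentially all the work.

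For the logical rule $(`lx.M)N \redlmu M[N/x]$, assume $ \derLmuIU `G |- M[N/x] : A | `D $ with $A$ strict. The Term Substitution Lemma yields some $C \ele \Types$ with $ \derLmuIU `G,x{:}C |- M : A | `D $ and $ \derLmuIU `G |- N : C | `D $; one $(\arrI)$-step produces $ \derLmuIU `G |- `lx.M : C\arr A | `D $, and a single-premise instance of $(\arrE)$ (the case $n=1$) types the redex at $A$.

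For the structural rule $(`m`a.[`b]M)N \redlmu `m`g.[`b]M[N{`.}`g/`a]$, start from $ \derLmuIU `G |- `m`g.[`b]M[N{`.}`g/`a] : A | `D $ with $A$ strict. The Generation Lemma shows that the derivation ends in $(\unE)$, whose premise types $M[N{`.}`g/`a]$ with $`g$ carrying $A$ (viewed as $\un{n}B_i$ for a suitable $n \geq 1$) in the right environment. The Structural Substitution Lemma then delivers types $A_i$ with $ \derLmuIU `G |- N : A_i | `D $ for each $\iotn$, together with a derivation of $M$ in which $`a$ has been assigned $\un{n}(A_i\arr B_i)$. One $(\unE)$-step produces $ \derLmuIU `G |- `m`a.[`b]M : \un{n}(A_i\arr B_i) | `D $, and an $n$-fold $(\arrE)$ combining this with the $N$-premises types $(`m`a.[`b]M)N$ at $\un{n}B_i = A$. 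The renaming rule $`m`a.[`b](`m`g.[`d]M) \redlmu `m`a.[`d]M[`b/`g]$ is a purely structural regrouping: Generation on the reduct exposes a derivation of $M[`b/`g]$, which, because the substitution is a bijective renaming of context variables, is equivalent to a derivation of $M$ with $`g$ reintroduced as a separate name carrying the same union type as $`b$; two nested $(\unE)$-steps, one binding $`g$ via $[`d]$ and one binding $`a$ via $[`b]$, then rebuild the redex.

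The main obstacle will be the structural case. The reduct collapses the context switch $`m`g.[`b](\cdot)$ and the argument $N$ into the single substituted term $M[N{`.}`g/`a]$, so the several occurrences of $N$ (one per named subterm $[`a]Q$ inside $M$) must be recollected and their individually assigned types $A_i$ regrouped into a coherent union $\un{n}(A_i\arr B_i)$ attached to $`a$. The Structural Substitution Lemma furnishes precisely this regrouping, but one must track the correspondence between the union $\un{n}B_i$ indexing the continuations named by $`g$ on the reduct side, the arrow union $\un{n}(A_i\arr B_i)$ assigned to $`a$ on the redex side, and the list of premises $ \derLmuIU `G |- N : A_i | `D $ that the ensuing $(\arrE)$-step consumes.
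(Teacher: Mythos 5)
You follow the paper's overall strategy exactly: induction on the reduction relation, with the Term Substitution Lemma and the Structural Substitution Lemma carrying the base cases, and your logical case is the paper's argument verbatim. The gap is in the structural case. The substitution $[N{\cdot}\gamma/\alpha]$ replaces \emph{every} pseudo-subterm $[\alpha]Q$ by $[\gamma](QN)$, including the outermost naming when $\beta=\alpha$, so the rule really has two shapes: for $\beta\neq\alpha$ the reduct is $\mu\gamma.[\beta](M[N{\cdot}\gamma/\alpha])$, which is the only shape your argument addresses (and there it essentially coincides with the paper's); for $\beta=\alpha$ the reduct is $\mu\gamma.[\gamma](M[N{\cdot}\gamma/\alpha]\,N)$, which carries an extra top-level application of $N$ and whose outer name has changed to $\gamma$. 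This second shape is where the paper does almost all of its work: Generation must first peel off that application, giving the function part a type $\cup_{r}(E_l\arr D_l)$ so that $r$ of the $n$ components of the type of $\gamma$ are accounted for by the top-level occurrence of $N$, and only the remaining $n{-}r$ components are fed to the Structural Substitution Lemma; when the redex is rebuilt, the union assigned to $\alpha$ has to be the combination $\cup_{r}(E_l\arr D_l)\cup\cup_{n{-}r}(A_i\arr B_i)$ of both families. Your single pass through the Structural Substitution Lemma followed by one $(\unE)$ and an $n$-fold $(\arrE)$ does not produce this, so as written the proof is incomplete precisely in the hardest case.

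A secondary inaccuracy is in the renaming case: $[\beta/\gamma]$ is not a bijective renaming when $\beta$ already occurs free in $M$; it merges two names into one. Hence $\gamma$ cannot simply be reintroduced ``carrying the same union type as $\beta$'': the union type that the reduct's derivation assigns to $\beta$ must be split into the part contributed by the original occurrences of $\beta$ and the part contributed by the occurrences that were named $\gamma$ before the renaming, which is exactly the decomposition of $\cup_{k}B_i$ into $\cup_{m}C_j$ and $\cup_{l}E_i$ that the paper performs before applying the two nested $(\unE)$ steps.
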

 \begin{proof}
By induction on the definition of reduction, where we focus on the reduction rules.

 \begin{description} \itemsep 2\point

 \item[{$ (`l x . M ) N \red M [ N /x ] : $}]
If $ \derLmuIU `G,x{:}B |- M[N/x] : A | `D $, then by Lemma~\ref{term substitution lemma} there exists a $B \ele \Types$ such that $ \derLmu `G,x{:}B |- M : A | `D $ and $ \derLmuIU `G |- N : B | `D $; then, by applying rule $(\arrI)$ to the first result we get $ \derLmuIU `G |- `lx.M : B\arr A | `D $ and then by $(\arrE)$ we get $ \derLmuIU `G |- (`lx.M)N : A | `D $.

 \item[{$ (`m `a . [`a]M ) N \red `m `g . [`g]M [ N{`.}`g/`a ]N : $}]
If $ \derLmuIU `G |- `m `g . [`g]M [ N{`.}`g/`a ]N : A | `D $, then $A = \AoUn$, and by Lemma~\ref{generation}, (wlog) 
there is $m \seq n$ such that $ \derLmuIU `G |- M [ N{`.}`g/`a ]N : \AoU{m} | `g{:}\AoUn,`D $, and there are $\Botm$ such that, $ \derLmuIU `G |- M [ N{`.}`g/`a ] : \un{m}(B_j \arr A_j) | `g{:}\AoUn,`D $ and for all $\jotm$, $\derLmuIU `G |- N : B_j | `D $.
Then, by Lemma~\ref{structural substitution lemma}, there are $\Cotn$ such that for all $\iotn$, $\derLmuIU `G |- N : C_i | `D $, and $ \derLmuIU `G |- M : \un{m}(B_j\arr A_j) | `a{:}\un{n}(C_i\arr A_i),`D $; (wlog) by weakening, we can assume $\BoUm \seq \CoUn$.
Then, by rule $(\unE)$, $ \derLmuIU `G |- `m`a.[`a]M : \un{n}(C_i\arr A_i) | `D $, and $ \derLmuIU `G |- (`m`a.[`a]M)N : \AoU{n} | `D $ then follows by rule $(\arrE)$.
\Comment{
 \item[{$ (`m `a . [`b]M ) N \red `m `g . [`b]M [ N{`.}`g/`a ] : $}] Similar to the previous part - but easier.
}

 \item[{$ (`m `a . [`b]M ) N \red `m `g . [`b]M [ N{`.}`g/`a ] : $}]
If $ \derLmuIU `G |- (`m`a.[`b]M)N : A | `D $, then $A = \AoUn$, and by $(\arrE)$ there are $\Cotn$ such that $ \derLmuIU `G |- `m`a.[`b]M : \un{n}(C_i\arr A_i) | `D $, and $ \derLmuIU `G |- N : C_i | `D $ for all $\iotn$; then by Lemma~\ref{structural substitution lemma}, $ \derLmuIU `G |- `m `g . [`b]M [ N{`.}`g/`a ] : A | `D $.
\Comment{
 \item[{$ (`m `a . M ) N \red `m `g . M [ N{`.}`g/`a ] : $}]
If $ \derLmuIU `G |- (`m`a.M)N : A | `D $, then $A = \AoUn$, and by $(\arrE)$ there are $\Cotn$ such that $ \derLmuIU `G |- `m`a.M : \un{n}(C_i\arr A_i) | `D $, and $ \derLmuIU `G |- N : C_i | `D $ for all $\iotn$; then by Lemma~\ref{structural substitution lemma}, $ \derLmuIU `G |- `m `g . M [ N{`.}`g/`a ] : A | `D $.

 \item[{$ `m`a.[`b](`m`g.[`d]M) \red `m`a.[`d](M[`b/`g]) $}]
If $ \derLmuIU `G |- M[`b/`g] : A | `D $, then there are $\Botn$ such that $`b{:}\BoUn,`D' = `D$. Since $M$ can contain $`b$ as well, this means that there are $C_j\,(\forall \jotm),D_i\,(\forall i \ele \k)$ such that $\CoUm \Union \un{\k}D_i = \BoUn$, and $ \derLmuIU `G |- M : A | `g{:}\CoUm,`b{:}\un{\k}D_i,`D' $.

Then $ \derLmuIU `G |- `m`g.M : \CoU{m} | `b{:}\un{\k}D_i,`D' $, and $ \derLmuIU `G |- [`b]`m`g.M : {\bot} | `b{:}\BoUn,`D' $.
}

 \item[{$ `m`a.[`b]`m`g.[`d]M \red `m`a.[`d](M[`b/`g]) : $}]
If $ \derLmuIU `G |- `m`a.[`d](M[`b/`g]) : A | `D $, then by rule $(\unE)$, there exist $`d{:}\DoUn \ele `D$ and $m \seq n$ such that $ \derLmuIU `G |- M[`b/`g] : \DoU{m} | `a{:}A,`D $.
Let $`D = `d{:}\DoUn,`b{:}\BoUk,`D'$.
Since $M$ can contain $`b$ as well, this means that there are $C_j\,(\forall j \ele \k),E_i\,(\forall i \ele \l)$ with $\CoUm \Union \un{\k}E_i = \BoUk$, and we can construct:
 \[ \begin{array}{c}
\Inf	[\unE]
	{\Inf	[\unE]
		{\InfBox{ \derLmuIU `G |- M : D_k | `g{:}\CoUm,`d{:}\DoUn,`b{:}\un{\k}E_i,`a{:}A,`D' }
		}{ \derLmuIU `G |- `m`g.[`d]M : \CoU{m} | `d{:}\DoUn,`b{:}\un{\k}E_i,`a{:}A,`D' }
	}{ \derLmuIU `G |- `m`a.[`b]`m`g.[`d]M : A | `D }
 \end{array} \]
which shows the result.
\QED

\Comment{
 \item[{$ [`b](`m`g.M) \red M[`b/`g] $}]
If $ \derLmuIU `G |- M[`b/`g] : A | `D $, then there are $\Botn$ such that $`b{:}\BoUn,`D' = `D$. Since $M$ can contain $`b$ as well, this means that there are $C_j\,(\forall \jotm),D_i\,(\forall i \ele \k)$ such that $\CoUm \Union \un{\k}D_i = \BoUn$, and $ \derLmuIU `G |- M : A | `g{:}\CoUm,`b{:}\un{\k}D_i,`D' $.

{\Redcol Then in order to apply rule $(`m)$, $A$ has to be $\bot$.}

Then $ \derLmuIU `G |- `m`g.M : \CoU{m} | `b{:}\un{\k}D_i,`D' $, and $ \derLmuIU `G |- [`b]`m`g.M : {\bot} | `b{:}\BoUn,`D' $.

 \item[{$ `m `a . [`a] M \red M $, if $`a$ does not occur in $M$}]
Let $A = \AoUn$.
If $ \derLmuIU `G |- `m`a .[`a]M : A | `D $, then, via rules $(`m)$ and $(\unE)$ also $ \derLmuIU `G |- M : A | `a{:}A,`D $.
Since $`a \not\in \fv(M)$, by Thinning, also $ \derLmuIU `G |- M : A | `D $.
}
 
 \end{description}
 \end{proof}

 \begin{theorem}[Subject reduction]
If $M \redlmu N$, and $ \derLmu `G |- M : A | `D $, where$A$ is not an intersection, then $ \derLmu `G |- N : A | `D $ 
 \end{theorem}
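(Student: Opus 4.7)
The plan is to mirror the structure of the subject expansion proof: induction on the definition of reduction, with the contextual closure cases being routine, and the interesting work carried out in the four base cases, one per reduction rule. In each case I would take a derivation of the redex, apply the relevant generation lemma to expose its internal structure, and then use the appropriate substitution lemma to build a derivation of the reduct.

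For the $\beta$-rule $(\lambda x.M)N \to M[N/x]$, generation for application gives $A = A_1 \cup \cdots \cup A_n$ together with derivations of $\lambda x.M : (B_1 \to A_1) \cup \cdots \cup (B_n \to A_n)$ and of $N : B_i$ for each $i$. Since the only rules that can produce a $\lambda$-abstraction at the subject are $(\to I)$ and $(\cap I)$, which yield arrows and intersections respectively (never proper unions), the outer union must collapse: $n = 1$. Hence $A = A_1$ is strict, $\lambda x.M : B_1 \to A_1$, and by generation for $\lambda$ we have $\Gamma, x{:}B_1 \vdash M : A_1 \mid \Delta$; the term substitution lemma then yields $\Gamma \vdash M[N/x] : A \mid \Delta$.

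For the structural rule $(\mu\alpha.[\beta]M)N \to \mu\gamma.[\beta]M[N\cdot\gamma/\alpha]$ (first with $\beta \neq \alpha$), generation for application produces $\mu\alpha.[\beta]M : (B_1 \to A_1) \cup \cdots \cup (B_n \to A_n)$ together with $N : B_i$ for each $i$, and a further application of generation (for $\mu$) exposes a derivation $\Gamma \vdash M : E_1 \cup \cdots \cup E_{m'} \mid \alpha : (B_1 \to A_1) \cup \cdots \cup (B_n \to A_n), \beta : E_1 \cup \cdots \cup E_m, \Delta'$ with $m' \leq m$. The structural substitution lemma, combined with the hypotheses $N : B_i$, then trades the $\alpha$-entry for $\gamma : A_1 \cup \cdots \cup A_n$, producing a derivation of $M[N\cdot\gamma/\alpha]$; a single application of $(\cup E)$ then binds $\gamma$ and assigns the desired type $A$. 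The subcase $\beta = \alpha$ is handled analogously using the second form of $(\cup E)$; after structural substitution one additionally applies $(\to E)$ to reintroduce $N$, since the reduct in this case is $\mu\gamma.[\gamma](M[N\cdot\gamma/\alpha]N)$.

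The renaming rule $\mu\alpha.[\beta]\mu\gamma.[\delta]M \to \mu\alpha.[\delta](M[\beta/\gamma])$ is straightforward: two nested applications of generation for $\mu$ produce a derivation of $M$ whose right-context contains entries for both $\beta$ and $\gamma$; after the substitution $M[\beta/\gamma]$ these two entries merge into a single $\beta$-entry carrying the union of the original types, and a single $(\cup E)$ step builds the reduct. I expect the main technical difficulty to lie in the structural cases, where the several indices appearing in generation for $\mu$ (both the outer $m_i$ and the inner $m_i'$ with $m_i' \leq m_i$) and in the structural substitution lemma must be matched precisely, with any mismatch absorbed by Weakening so that the side-condition of the final $(\cup E)$ step holds. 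Note finally that the hypothesis `$A$ is not an intersection' is essential: it ensures generation for application applies directly, avoiding the need to first invert an $(\cap I)$.
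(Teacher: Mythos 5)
Your proposal is correct and follows essentially the same route as the paper's own proof: induction on the reduction relation, using the generation lemma together with the term substitution lemma for the logical rule, the structural substitution lemma followed by a final $(\cup E)$ step (plus an extra $(\to E)$ in the $\beta=\alpha$ subcase to reattach $N$) for the structural rule, and a direct type-merging argument for renaming. The only differences are points the paper leaves implicit, such as the explicit collapse of the union type over a $\lambda$-abstraction in the logical case and the spelled-out $\beta\neq\alpha$ subcase.
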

 \begin{proof}
 \begin{description} \itemsep 2\point

 \item[{$(`l x . M ) N \red M [ N /x ] $}]
Let $ \derLmu `G |- (`lx.M)N : A | `D $.
Then by Lemma~\ref{generation} there exists $B \ele \Types$ such that $ \derLmu `G |- `lx.M : B\arr A | `D $ and $ \derLmu `G |- N : B | `D $, and also $ \derLmu `G,x{:}B |- M : A | `D $.
Then, by Lemma~\ref{term substitution lemma}, we have $ \derLmu `G |- M[N/x] : A | `D $.

 \item[{$ (`m `a . [`a]M ) N \red `m `g . [`g]M [ N{`.}`g/`a ]N : $}]
If $ \derLmuIU `G |- (`m`a.[`b]M)N : A | `D $, then by Lemma~\ref{generation} there exist $\Aotn$ and $\Cotn$ such that $A = \AoUn$, and $ \derLmuIU `G |- `m`a.[`b]M : \un{n}(C_i\arr A_i) | `D $ and, for all $\iotn$, $ \derLmuIU `G |- N : C_i | `D $; then also $ \derLmuIU `G |- M : B | `a{:}\un{n}(C_i\arr A_i),`D $, with $`b{:}B' \ele `D$ with $B$ and $B'$ union types such that $B \seq B'$.
Then, by Lemma~\ref{structural substitution lemma}, $ \derLmuIU `G |- M[ N{`.}`g/`a ] : B | `g{:}\AoUn,`D $, so, by rule $(\unE)$, $ \derLmuIU `G |- `m`g.[`b]\,M[ N{`.}`g/`a ] : \AoU{n} | `D $.
Then, by $(\arrE)$, $ \derLmuIU `G |- MN : C | `a{:}\un{n}(A_i\arr B_i),`D $.
\Comment{
 \item[{$ (`m `a . [`b]M ) N \red `m `g . [`b]M [ N{`.}`g/`a ] : $}] Similar to the previous part - but easier.
}

 \item[{$ `m`a.[`b]`m`g.[`d]M \red `m`a.[`d](M[`b/`g]) $}]
If $ \derLmuIU `G |- `m`a.[`b]`m`g.[`d]M : A | `D $, the derivation is shaped like:
 \[ \begin{array}{c}
\Inf	[\unE]
	{\Inf	[\unE]
		{\InfBox{ \derLmuIU `G |- M : D_p | `g{:}B_l,`d{:}\DoUn,`b{:}\un{\k}B_i,`a{:}A,`D' } 
		}{ \derLmuIU `G |- `m`g.[`d]M : B_l | `b{:}\un{\k}B_i,`a{:}A,`D' }
	}{ \derLmuIU `G |- `m`a.[`b]`m`g.[`d]M : A | `b{:}\un{\k}B_i,`D' }
 \end{array} \]
with $`D = `b{:}\BoUm,`D'$, for some $\Botm$, with $l \ele \k$, and $p \ele \n$.
It is straightforward to show that then $\derLmuIU `G |- M[`b/`g] : D_p | `b{:}\un{\k}B_i,`a{:}A,`D' $, and applying rule $(\unE)$ to this derivation gives 
$\derLmuIU `G |- `m`a.[`d](M[`b/`g]) : A | `D $.
\QED

 \end{description}
 \end{proof}

Notice that we cannot show subject reduction for the \emph{erasing} rule.
Assume the derivation for $ `m `a . [`a] M $ with $M$ not a control structure is shaped like
 \[ \begin{array}{c}
\Inf	[`m]
	{\Inf	[\unE]
		{\InfBox{ \derLmuIU `G |- M : A_j | `a{:}\AoUn,`D }
		}{\derLmuIU `G |- [`a]M : {\bot} | `a{:}\AoUn,`D }
	}{ \derLmuIU `G |- `m`a.[`a]M : \AoU{n} | `D }
 \end{array} \]
Since $`a$ does not appear in $M$, by thinning we can derive $ \derLmuIU `G |- M : A_j | `D $, but have no rule to allow us to derive $ \derLmuIU `G |- M : \AoU{n} | `D $ from that.

\section*{Conclusions}
We have seen that the calculus $\lmu$ is sufficiently limited to allow for the definition of a sound and complete notion of type assignment.
This will need to be investigated further, towards the definition of semantics, and characterisation of the termination properties.
Also, we need to look at the ignored reduction rules, and see if it is possible to generalise the system such that also these will be preserved, without sacrificing the main results. 
The approach we use here seems to be promising also for the setting of (restrictions of) $\X$ and $\lmmt$; we will leave this for future work.


{

}

 \end{document}